\documentclass[a4paper]{article}

\usepackage{authblk}
\usepackage[]{geometry}
\usepackage{hyperref}

\usepackage{graphicx}
\urlstyle{rm}

\usepackage{graphicx}
\usepackage{natbib}
\usepackage{caption}

\setcounter{secnumdepth}{2} 


%
\usepackage[english]{babel} 
\usepackage{amsfonts} 
\usepackage{amsmath} 
\usepackage{amssymb} 
\usepackage{amsthm} 
\usepackage{dsfont} 
\usepackage{mathtools} 
\usepackage{tikz} 
\usetikzlibrary{automata, positioning} 
\usepackage{xfrac}    
\usepackage{xargs}
\usepackage{thm-restate}
\usepackage{multirow}
\usepackage{cleveref} 
\usepackage[linesnumbered,boxruled]{algorithm2e}
\IncMargin{1em}
\crefname{algocf}{Algorithm}{Algorithms}
\Crefname{algocf}{Algorithm}{Algorithms}

%
\newtheorem{theorem}{Theorem}

\newtheorem{lemma}[theorem]{Lemma}
\newtheorem{corollary}[theorem]{Corollary}
\newtheorem{example}{Example}
\newtheorem{definition}{Definition}
\newtheorem{remark}{Remark}

%
\DeclareMathOperator{\val}{val}
\DeclareMathOperator{\Parity}{Parity}
\DeclareMathOperator{\Reach}{Reach}
\DeclareMathOperator{\BeliefReach}{Belief-Reach}

\DeclareMathOperator{\priority}{pri}

\DeclareMathOperator{\supp}{supp}

\DeclareMathOperator*{\argmin}{arg\,min}
\newcommand{\transition}{\delta}
\newcommand{\policy}{\sigma}
\renewcommand{\next}[1]{#1'}
\newcommand{\another}[1]{\widetilde{#1}}
\newcommand{\POMDP}{P}
\newcommand{\MDP}{M}
\newcommand{\play}{\rho}
\newcommand{\Plays}{\Omega}
\newcommand{\Policies}{\Sigma}

%
\newcommand{\A}{\mathcal{A}}

\newcommand{\D}{\mathcal{D}}
\newcommand{\E}{\mathcal{E}}

\newcommand{\I}{\mathcal{I}}

\newcommand{\Q}{\mathcal{Q}}
\renewcommand{\L}{\mathcal{L}}
\newcommand{\R}{\mathcal{R}}
\renewcommand{\S}{\mathcal{S}}
\newcommand{\T}{\mathcal{T}}
\newcommand{\U}{\mathcal{U}}

\newcommand{\W}{\mathcal{W}}
\newcommand{\X}{\mathcal{X}}
\newcommand{\Z}{\mathcal{Z}}

\newcommand{\NN}{\mathbb{N}}
\newcommand{\PP}{\mathbb{P}}
\newcommand{\EE}{\mathbb{E}}

\newcommand{\eps}{\varepsilon}
\newcommand{\defas}{\coloneqq}
\newcommand{\1}{\mathds{1}}

\title{Revealing POMDPs: Qualitative and Quantitative Analysis for Parity Objectives}

\author[1]{Ali Asadi}
\author[1]{Krishnendu Chatterjee}
\author[2]{David Lurie}
\author[3]{Raimundo Saona}

\affil[1]{Institute of Science and Technology Austria}
\affil[2]{Paris Dauphine University, PSL Research University, Paris, France and NyxAir, Paris, France}
\affil[3]{London School of Ecnonomics and Political Science, London, United Kingdom}

\affil[ ]{\textit{\{ali.asadi, krishnendu.chatterjee\}@ista.ac.at, david.lurie@dauphine.eu, raimundo.saona@gmail.com}}

\begin{document}

\maketitle

\begin{abstract}
    Partially observable Markov decision processes (POMDPs) are a central model for uncertainty in sequential decision making. 
    The most basic objective is the reachability objective, where a target set must be eventually visited, and the more general parity objectives can model all $\omega$-regular specifications.
    For such objectives, the computational analysis problems are the following: 
    (a)~qualitative analysis that asks whether the objective can be satisfied with probability~$1$ (almost-sure winning) or probability arbitrarily close to~$1$ (limit-sure winning);
    and (b)~quantitative analysis that asks for the approximation of the optimal probability of satisfying the objective.
    For general POMDPs, almost-sure analysis for reachability objectives is \textnormal{EXPTIME}-complete, but limit-sure and quantitative analyses for reachability objectives are undecidable; almost-sure, limit-sure, and quantitative analyses for parity objectives are all undecidable.
    A special class of POMDPs, called revealing POMDPs, has been studied recently in several works, and for this subclass the almost-sure analysis for parity objectives was shown to be \textnormal{EXPTIME}-complete.
    In this work, we show that for revealing POMDPs the limit-sure analysis for parity objectives is \textnormal{EXPTIME}-complete, and even the quantitative analysis for parity objectives can be achieved in \textnormal{EXPTIME}.
\end{abstract}

\section{Introduction}

\paragraph{POMDPs}
\emph{Partially observable Markov decision processes} (POMDPs) model sequential decision-making under uncertainty \cite{bertsekas1976DynamicProgrammingStochastic, papadimitriouComplexityMarkovDecision1987,kaelbling1998planning}.
At each step, the environment is in some hidden state.
A controller interacts with it by choosing actions.
The chosen action and the current state determine a probability distribution over the subsequent state.
The controller cannot observe the state directly, but observes a signal that discloses only partial information of the state. 
POMDPs generalize classic models: \emph{Markov decision processes} (MDPs), in which the state is fully observed \cite{puterman1994},
and \emph{blind MDPs}, in which no state information is observed and are equivalent to Probabilistic Finite Automata \cite{rabin1963probabilistic,paz1971introduction}.

\paragraph{Objectives}
The controller aims to maximize an \emph{objective function}, which formally captures the desired behaviors of the model. 
Two main classes of objectives are typically considered: \emph{logical objectives}, e.g., reachability and LTL (linear-temporal logic), and \emph{quantitative objectives}, e.g., discounted sum and limit-average; see~\cite{puterman1994, baier2008principles, filarCompetitiveMarkovDecision1997} for details.
This work focuses on logical objectives. 

The most basic example of logical objectives is \emph{reachability}, i.e., given a set of target states, the objective requires that some target state is visited at least once.
A more general class of logical objectives is \emph{parity objectives}, which assign to each state a non-negative integer, called \emph{priority}.
The objective is satisfied if the smallest priority appearing infinitely often is even.
Parity objectives express all commonly used temporal objectives such as liveness, and are a canonical form to express all $\omega$-regular objectives~\cite{thomasLanguagesAutomataLogic1997} and LTL objectives~\cite{baier2008principles}. 
Hence, the study of POMDPs with parity objectives is a fundamental theoretical problem.

\paragraph{Applications}
POMDPs have been applied across diverse fields, including computational biology \cite{durbin1998biological} and reinforcement learning \cite{kaelbling1996reinforcement}.
In particular, POMDPs with logical objectives have proven useful in many application areas such as probabilistic planning~\cite{bonet2009solving}; randomized embedded scheduler~\cite{de2005code}; randomized distributed algorithms~\cite{pogosyants2000verification}; probabilistic specification languages~\cite{baier2012probabilistic}; and robot planning \cite{kress2009temporal,chatterjee2015qualitative}.

\paragraph{Computational Analysis Questions}
A policy determines the choice of actions by the controller. 
The value corresponds to the maximum probability that the controller can guarantee to satisfy an objective.
The main computational analysis of POMDPs with reachability and parity objectives considers the following two problems.
\begin{itemize}
    \item[(a)] 
        \emph{Qualitative analysis} has two variants: 
        the \emph{almost-sure winning} asks whether there exists a policy that satisfies the objective with probability~$1$; and 
        the \emph{limit-sure winning} asks whether the objective can be satisfied with probability arbitrarily close to~$1$.
    \item[(b)] 
        \emph{Quantitative analysis} asks to approximate the maximum or optimal probability with which the objective can be satisfied, up to a given additive error.
\end{itemize}

\paragraph{General Undecidability}
Most results for POMDPs and the computational analysis are negative (undecidability results).
Indeed, the limit-sure analysis for reachability objectives is undecidable~\cite{gimbert2010probabilistic, chatterjee2010probabilistic}, which extends to general parity objectives.
The almost-sure analysis for reachability objectives is \textnormal{EXPTIME}-complete \cite{baierDecisionProblemsProbabilistic2008, chatterjee2010qualitative}, but is undecidable for parity objectives even for two priorities (namely coBüchi objectives)~\cite{baierDecisionProblemsProbabilistic2008, chatterjeeRandomnessFree2010}. 
The quantitative analysis for reachability objectives is undecidable~\cite{madani2003undecidability}, which extends to general parity objectives.
Given this wide range of results about non-existence of algorithms a natural direction to explore is the existence of subclasses for which the computational problems are decidable.

\paragraph{Revealing POMDPs} 
We study \emph{revealing POMDPs}, a special subclass of POMDPs where each visited state is announced to the controller with a positive probability. 
Consequently, the controller's uncertainty about the state, i.e. the probability distribution over states (the \emph{belief}), occasionally collapses into a Dirac distribution on the announced state.
Revealing POMDPs have been previously studied in several works, including \cite{chenIntermittentlyObservableMarkov2023, bellyRevelationsDecidableClass2025, avrachenkovConstrainedAverageRewardIntermittently2025}, which present motivation and applications for this model.

\paragraph{Previous Results and Open Questions}
A key result for revealing POMDPs shows that the almost-sure analysis for parity objectives is \textnormal{EXPTIME}-complete \cite{bellyRevelationsDecidableClass2025}. 
However, the limit-sure and quantitative analysis problems for reachability and parity objectives remained open for revealing POMDPs.

\paragraph{Our Contributions}
We address the open questions for revealing POMDPs. 
Our main contributions are the following.
For revealing POMDPs with parity objectives, we show that

\begin{itemize}
    \item 
        The limit-sure analysis coincides with almost-sure analysis, and consequently is \textnormal{EXPTIME}-complete.
        
    \item
        The quantitative analysis can be achieved in \textnormal{EXPTIME}, i.e., in the same complexity as for qualitative analysis.
\end{itemize}
The results for POMDPs and revealing POMDPs are summarized in 
\Cref{Table: Summary of general results} and \Cref{Table: Summary of revealing results}, respectively. 

\begin{table}[t]
 	\centering
 	\begin{tabular}{|l|c|c|}
 		\hline
 		\multicolumn{1}{|c|}{\multirow{2}{*}{Problems}} & \multicolumn{2}{c|}{Objectives} \\
 		\cline{2-3}
 		& Reachability & Parity \\
 		\hline\hline
 		Almost-sure & EXP-complete & Undecidable \\
 		\hline
 		Limit-sure & Undecidable & Undecidable \\
 		\hline
 		Quantitative & Undecidable & Undecidable \\
 		\hline
 	\end{tabular}
    \caption{
 		Computational complexity for POMDPs with reachability and parity objectives.
 	}
 	\label{Table: Summary of general results}
\end{table}

\begin{table}[t]
 	\centering
 	\begin{tabular}{|l|c|c|}
 		\hline
 		\multicolumn{1}{|c|}{\multirow{2}{*}{Problems}} & \multicolumn{2}{c|}{Objectives} \\
 		\cline{2-3}
 		& Reachability & Parity\\
 		\hline\hline
 		Almost-sure & EXP & EXP-complete \\
 		\hline
 		Limit-sure & \textbf{EXP} & \textbf{EXP-complete} \\
 		\hline
 		Quantitative & \textbf{EXP} & \textbf{EXP} \\
 		\hline
 	\end{tabular}
     \caption{
 		Computational complexity for revealing POMDPs with reachability and  parity objectives.
        Our contributions are marked in bold.
 	}
 	\label{Table: Summary of revealing results}
\end{table}

\paragraph{Technical Contributions}
A closely related work established that almost-sure analysis for parity objectives in revealing POMDPs is \textnormal{EXPTIME}-complete \cite{bellyRevelationsDecidableClass2025}.
Additionally, \cite{chatterjeeWhatDecidablePartially2016} previously showed that almost-sure analysis for parity objectives in general POMDPs under finite-memory policies is \textnormal{EXPTIME}-complete. 
Therefore, the \textnormal{EXPTIME}-completeness result naturally follows once finite memory policies are proven sufficient for almost-sure analysis.
However, limit-sure analysis and quantitative analysis for POMDPs remain undecidable in general, even under finite memory policies~\cite{chatterjeeFiniteMemoryStrategiesPOMDPs2021}. 
This highlights a sharp contrast with the almost-sure analysis and gives rise to new technical challenges. 

First, we consider revealing POMDPs with the belief-reachability objectives. 
Given a set of target states, the belief-reachability objectives consider the probability of eventually observing such states. 
We prove that the quantitative analysis of belief-reachability in revealing POMDPs  can be achieved in \textnormal{EXPTIME}. 
The argument proceeds in two steps: $(i)$ we prove that this objective can be approximated by their finite-horizon counterparts; and $(ii)$ we show that this finite-horizon objective can be approximated by a point-based approximation.

Then, we consider revealing POMDPs with the parity objectives. 
We show that the value for parity objectives corresponds to the value for belief-reachability objectives to almost-sure winning states.
Finally, we prove the existence of optimal policies for parity objectives in revealing POMDPs. 

Proofs and details omitted due to space restrictions are provided in the Appendix.

\paragraph{Related Works}
The study of subclasses of POMDPs with tractable algorithms is a broad topic with many directions.
For example, subclasses of POMDPs for qualitative analysis have been explored in~\cite{fijalkow2015deciding,chatterjee2012decidable};
and for quantitative analysis various subclasses have also been studied such as with ergodicity condition~\cite{chatterjee2024ergodic}; multiple environments only~\cite{van2023robust,chatterjee2025value}; or in online learning~\cite{liu2022partially,chen2023lower}. 
Our work focuses on such a subclass, namely revealing POMDPs, which has been studied in the literature.

\citet{avrachenkovConstrainedAverageRewardIntermittently2025} studied strongly-connected revealing POMDPs and restricting attention to the set of belief-stationary policies.
They proved that there is an optimal policy within this set, whose induced belief dynamics are contracting and reach a positive recurrent class of beliefs in finite time.
They also provided an infinite-dimensional linear program that characterizes the optimal belief-stationary policy. 
They considered an application of strongly-connected revealing blind MDPs.
\cite{chenIntermittentlyObservableMarkov2023} introduced the model of revealing blind MDPs.
They consider the discounted objective and present algorithms based on finite-horizon approximation to compute the discounted value.
Our work addresses the open computational analysis questions for revealing POMDPs.


\section{Preliminaries}
\label{Section: Preliminaries}

\paragraph{Notation}
For a positive integer $n$ the set $\{1, 2, \ldots, n\}$ is denoted by $[n]$.
Sets and correspondences are denoted by calligraphic letters, e.g., $\S, \A, \Z$. 
Elements of these sets are denoted by lowercase letters, e.g., $s, a, z$.
Random elements with values in these sets are denoted by uppercase letters, e.g., $S, A, Z$. 
The set of probability measures over a finite set $\S$ is denoted by $\Delta(\S)$.
The Dirac measure at some element $s \in \S$ is denoted by $\1[s]$.
The support of a probability measure $b \in \Delta(\S)$ is denoted by $\supp(b)$.

\begin{definition}[POMDP]   
    A POMDP is a tuple $\POMDP = (\S, \A, \Z,\transition, b_0)$, where:
    \begin{itemize}
        \item 
            $\S$ is a finite set of states;
        \item
            $\A$ is a finite set of actions;
        \item
            $\Z$ is a finite set of signals;
        \item
            $\transition \colon \S \times \A \to \Delta(\S \times \Z)$ is a probabilistic transition function that, given a state $s$ and an action $a$, returns the probability distribution over the successor states and signal;
        \item
            $b_0 \in \Delta(\S)$ is the initial belief over the states.
    \end{itemize}
\end{definition}
\noindent Markov Decision Processes (MDPs) are POMDPs in which the signal corresponds to the state~\cite{puterman1994}. 
Formally, an MDP is denoted by $M = (\S, \A, \transition, b_0)$ with transition function $\transition \colon \S \times \A \to \Delta(\S)$. 

\paragraph{Dynamic}
Given a POMDP, a controller knows all defining parameters. 
At the beginning, nature draws a state $S_0 \sim b_0$, which is not informed to the controller. 
Then, at each step $t \ge 0$, the controller chooses an action $A_t \in \A$, possibly at random.
In response, nature draws the next state and signal $(S_{t + 1}, Z_{t + 1}) \sim \transition(S_t, A_t)$.
The signal $Z_{t + 1}$ is revealed to the controller, but the state $S_{t + 1}$ is not announced.
A play (or a path) in the POMDP is an infinite sequence $\play = (s_0, a_0, z_1, s_1, a_1, z_2, s_2, a_2, \ldots)$ of states, actions, and signals such that, for all $t \ge 0$, we have $\transition(s_t, a_t)(s_{t + 1}, z_{t + 1}) > 0$. 
The set of all plays is denoted by $\Plays$.

\paragraph{Belief}
At each step $t \ge 0$, the controller's (random) belief about the current state can be computed using Bayes' rule and is denoted by $B_t \in \Delta(\S)$. 
In the cases where the controller knows the exact state the controller's belief is a Dirac measure $\1[s]$ at some state $s \in \S$.

\begin{definition}[Revealing POMDP]
    A POMDP is revealing if, each time a state is visited, the state is also announced to the controller with positive probability.
    Formally, for each state there is a designated signal, i.e., $\S \subseteq \Z$, and, for all states $s, \next{s} \in \S$ and actions $a \in \A$,
    \begin{align*}
        &\sum_{z \in \Z} \transition(s, a)(\next{s}, z) > 0\\
        &\qquad \implies \sum_{\another{s} \in \S} \transition(s, a)(\another{s}, \next{s}) = \transition(s, a)(\next{s}, \next{s}) > 0 \,.
    \end{align*} 
\end{definition}

Revealing POMDPs coincide with the class of strongly revealing defined in~\cite{bellyRevelationsDecidableClass2025}.
Indeed, if the controller observes a signal $s \in \Z$, then their next belief is $\1[s]$, which corresponds to a revelation of the state.

\begin{remark}[Signals]
    The signaling structure of POMDPs is modeled in different ways in the literature. 
    We comment on two general cases.
    \begin{itemize}
        \item 
            The controller may receive a set of signals at each step with a transition function $\transition \colon \S \times \A \to \Delta(\S \times 2^{\Z})$.
            In that case, we consider each set of signals as one signal $\another{\Z} \defas 2^{\Z}$ and reduce to our model.
            Even with exponentially many signals encoded in a polynomial size input, our \textnormal{EXPTIME} upper bound holds.        
        \item 
            In general POMDPs, it is enough to consider transition functions that pair each state with only one signal, known as deterministic observation, see~\cite[Remark 4]{chatterjeeWhatDecidablePartially2016}.
    For revealing POMDPs, deterministic observations do not capture the general case because they correspond to MDPs.
    \end{itemize}
\end{remark}

\paragraph{Policies}
A \emph{policy} defines how a controller selects actions based on all the information available up to a given step.
Formally, a (history-dependent randomized) policy is a function $\policy \colon (\A \times \Z)^* \to \Delta(\A)$. 
The set of all policies is denoted by $\Policies$.
A policy is pure if it prescribes deterministic actions, i.e., it corresponds to a function $\policy \colon (\A \times \Z)^* \to \A$.

\paragraph{Probability Measures}
For a finite prefix of a play, an element in $(\S \times \A)^*$, its cone is the set of plays with it as their prefix. 
Given a policy $\policy$ and an initial belief $b_0$, the unique probability measure over Borel sets of infinite plays obtained given $\sigma$ is denoted by $\PP^\policy_{b_0}( \cdot )$, which is defined by Carathéodory's extension theorem by extending the natural definition over cones of plays~\cite{billingsley2012ProbabilityMeasurea}. 

\paragraph{Objectives}
An objective in a POMDP is a Borel set of plays $\Phi \subseteq \Plays$ in the Cantor topology on $\Plays$~\cite{kechrisClassicalDescriptiveSet1995}.
We consider objectives in the first $2\sfrac{1}{2}$ levels of the Borel hierarchy, including the parity objective which expresses all $\omega$-regular objectives~\cite{thomasLanguagesAutomataLogic1997}.
Denote the state at time $t$ by $s_t$ and set of states that occur infinitely often in a play $\play$ by $\I(\play) \defas \{ s \in \S : \forall t \ge 0 \; \exists \another{t} \ge t \quad s = s_{\another{t}} \in \play \}$.
We consider the following objectives.
\begin{itemize}
	\item 
		\emph{Reachability:}
		Given a set $\X \subseteq \S$ of target states, the reachability objective requires that a target state is visited at least once.
		Formally, the reachability objective is 
		\[
		\Reach(\X) \defas \{ \play \in \Plays : \exists t \ge 0 \quad s_t \in \X \}.
		\]
	\item 
		\emph{Parity:}
		Given a $d \ge 0$ and a function $\priority \colon \S \to \{0, 1, \ldots, d \}$ of priorities, the parity objective requires that the smallest priority that appears infinitely often is even.
		Formally, 
		\[
		\Parity \defas  \left\{ \play \in \Plays : \min \{ \priority(s) : s \in \I(\play) \} \text{ is even} \right\}.
		\]
\end{itemize}

\paragraph{Value}
The value of an objective in a POMDP is the maximum probability a controller can guarantee to satisfy the objective. 
Formally, given an objective $\Phi \subseteq \Plays$, the value is a function of the initial belief $\val_\Phi \colon \Delta(\S) \to [0, 1]$ defined by $\val_{\Phi}(b) \defas \sup_{\policy\in \Policies} \PP^{\policy}_{b}( \play \in \Phi)$.
Denote the reachability and parity values by $\val_{\textnormal{R}(\X)}$ and $\val_{\textnormal{P}}$, respectively. 
We may omit $\X$ in $\val_{\textnormal{R}(\X)}$ if it is clear from the context.

\paragraph{Approximately Optimal Policies}
Given $\eps \ge 0$ and an objective $\Phi \subseteq \Plays$, a policy $\policy \in \Policies$ is $\eps$-optimal if it guarantees the value up to an additive error $\eps$, i.e., if $\PP_b^{\policy}(\play \in \Phi) \ge \val_{\Phi}(b) - \eps$.
In particular, we call $0$-optimal policies simply optimal.

\paragraph{Computational Analysis}
The computational analysis problems for POMDPs with an objective $\Phi$ are: 
\begin{itemize}
\item  \emph{Almost-sure} analysis asks whether the objective can be satisfied with probability~$1$, i.e., $\exists \policy\in \Policies$ such that $\PP_{b_0}^{\policy}(\play \in \Phi) = 1$.
\item \emph{Limit-sure} analysis asks whether the objective can be satisfied with probability arbitrarily close to~1, i.e.,
 $\forall \eps > 0$ $\exists \policy\in \Policies$ such that $\PP_{b_0}^{\policy}(\play \in \Phi) \ge 1 - \eps$ or equivalently whether
 $\val_{\Phi}(b_0)=1$.
\item \emph{Quantitative} analysis asks to compute an approximation of the optimal value, i.e., for all $\eps > 0$, provide $v \in [0, 1]$ such that $\left|v - \val_{\Phi}(b_0)\right| \le \eps$.
\end{itemize}


\section{Overview}
\label{Section: Overview}
We present an overview of our approach and the results.

\subsection{Overview of Approach}
To solve the qualitative and quantitative analysis of parity objectives, we introduce a new objective called belief-reachability and show that the parity value coincides with the belief-reachability value to the set of almost-sure winning parity states. 

\paragraph{Belief-Reachability} 
Given a set of target states $\X \subseteq \S$, the belief-reachability objective requires that a target state is visited and the controller has complete knowledge of the state at that step at least once.
Formally, the set of Dirac beliefs on $\X\subseteq \S$ is denoted by $\D_\X \defas \left\{ b \in \Delta(\S) : \exists s \in \X \textnormal{ such that } b = \1[s] \right\}$.
Then, the belief-reachability objective is $\BeliefReach(\X) \defas \{ \play \in \Plays : \exists t \ge 0 \quad B_t(\play) \in \D_{\X} \}$. 
Denote the belief-reachability value by $\val_{\textnormal{BR}(\X)}$. 
We may omit $\X$ in $\val_{\textnormal{BR}(\X)}$ if it is clear from the context.

\begin{remark}[Generality of belief-reachability]
\label{Result: belief-reachabilty covers reachability}
    In general POMDPs with reachability objectives, without loss of generality, target states can be considered absorbing, i.e., the dynamic remains in the same state once a target state is visited.
    Furthermore, one can reduce to the case where there is only one target state.
    These simplifications affect the dynamic, but neither optimal policies nor the reachability value.
    Belief-reachability generalizes reachability objectives as follows.
    Given a POMDP $\POMDP$ with an absorbing target state $s^*$, consider a copy of $\POMDP$ but add a new action and two absorbing states $\top$ and $\bot$.
    After playing the new action, the state moves to $\top$ if the state was in $s^*$, and to $\bot$ otherwise.
    Moreover, the controller is announced which of these states is reached.
    Then, the belief-reachability value to $\top$ coincides with the original reachability value.
    Indeed, for an approximately optimal policy of $\POMDP$, play the new action after sufficiently many steps to obtain approximately the same value in the belief-reachability objective.
\end{remark}

\subsection{Overview of Results}
\paragraph{Results for Belief-Reachability Objectives}
Our main results for belief-reachability objectives are the following, which are proved in \Cref{Section: Reachability}.

\begin{restatable}{theorem}{approxbeliefreach}
\label{Result: Approximating belief-reachability is EXPTIME}
    Quantitative analysis for belief-reachability objectives for revealing POMDPs is in \textnormal{EXPTIME}.
\end{restatable}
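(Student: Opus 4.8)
The plan is to pass to the belief MDP and reformulate belief-reachability as an ordinary reachability objective there. Observe that the belief $B_t$ becomes a Dirac $\1[s]$ exactly when the signal $Z_t = s$ is observed (recall $\S \subseteq \Z$, so observing the state-signal $s$ collapses the belief to $\1[s]$); hence $\play \in \BeliefReach(\X)$ iff at some step a target signal $s \in \X$ is observed. We may therefore add an absorbing ``success'' state entered upon the first such observation, so that $\val_{\textnormal{BR}(\X)}$ equals the reachability value to this state. Following the two-step outline, I would (i) replace the infinite-horizon value by a finite-horizon value $\val^T_{\textnormal{BR}(\X)}(b) \defas \sup_{\policy \in \Policies}\PP^\policy_b(\exists\, t \le T : B_t \in \D_\X)$ with an explicit horizon $T$, and then (ii) approximate $\val^T_{\textnormal{BR}(\X)}$ by a point-based (grid) computation on the belief simplex.

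For step (i), the crucial leverage is the revealing property. Let $p_{\min} > 0$ be the minimum revelation probability, so that, regardless of the policy and the past, at every step the current state is announced---and the belief collapses to a Dirac---with probability at least $p_{\min}$. Consequently the number of revelations grows linearly in the horizon with high probability, and the success event ``a revealed state lies in $\X$'' is decided over a sequence of revelations that occur at a geometric rate. Since the value is Markovian on beliefs, I would turn this into a uniform truncation bound of the form $0 \le \val_{\textnormal{BR}(\X)}(b) - \val^T_{\textnormal{BR}(\X)}(b) \le (1-p_{\min})^{cT/|\S|}$ for a constant $c>0$: informally, any success realized only after time $T$ requires either anomalously few revelations within $T$ steps (a Chernoff-type tail) or a chain of more than $|\S|$ distinct intermediate revealed states before a target is hit, both of which are exponentially unlikely. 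This makes $T$ of size $\mathrm{poly}(|\S|, 1/p_{\min}, \log(1/\eps))$ sufficient to guarantee error at most $\eps/2$.

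For step (ii), I would use that the finite-horizon value is piecewise linear and convex in the belief: $\val^T_{\textnormal{BR}(\X)}(b) = \max_{\alpha \in \Gamma_T}\langle \alpha, b\rangle$ for a finite set $\Gamma_T \subseteq [0,1]^{\S}$ of $\alpha$-vectors, and hence $1$-Lipschitz in the $\ell_1$ norm (each $\alpha$ has entries in $[0,1]$, and a maximum of $1$-Lipschitz functions is $1$-Lipschitz). The set $\Gamma_T$ can grow exponentially with $T$, so instead of maintaining it exactly I would run grid-based value iteration: fix a grid $G \subseteq \Delta(\S)$ of resolution $\eta$ proportional to $\eps/T$ and, for $T$ backups, keep only the $\alpha$-vectors optimal at some grid point. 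Lipschitz continuity bounds the per-stage interpolation error by $O(\eta)$, so the accumulated error over the $T$ stages is at most $\eps/2$; combined with step (i) this yields an $\eps$-approximation of $\val_{\textnormal{BR}(\X)}(b_0)$. The grid has $(1/\eta)^{O(|\S|)}$ points and each backup is polynomial in the grid size and in the POMDP, giving an overall \textnormal{EXPTIME} procedure.

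The main obstacle is step (i). For general POMDPs no policy-independent bound on the truncation horizon exists---this is precisely why quantitative reachability is undecidable---so the argument must extract a genuinely quantitative, uniform-in-policy decay from the revealing assumption rather than from any fixed near-optimal policy. The delicate points are (a) coupling the tail bound on the number of revelations with the step count uniformly over all history-dependent randomized policies, and (b) controlling success that is ``spread out'' across many intermediate non-target revelations, which is where the factor $|\S|$ enters the horizon. Step (ii) is comparatively standard, relying only on the piecewise-linear-convex structure and Lipschitz continuity of finite-horizon POMDP value functions.
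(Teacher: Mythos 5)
Your overall architecture---truncate to a finite horizon, then approximate the finite-horizon value by grid-based value iteration with a Lipschitz error-accumulation argument---matches the paper, and your step (ii) is essentially its \Cref{Result: Approximation of the T step reachability value} (grid of resolution $\eps/(T+1)$, $1$-Lipschitzness of $\val^T_{\textnormal{BR}(\X)}$ in $\ell_1$ on fixed supports, error $\eps/(T+1)$ per backup). The gap is in step (i). The claimed truncation bound $\val_{\textnormal{BR}(\X)}(b)-\val^T_{\textnormal{BR}(\X)}(b)\le(1-p_{\min})^{cT/|\S|}$, and hence a horizon polynomial in $|\S|$, is false. Take a fully observable chain $s_1,\dots,s_m,g$ with a single action (a deterministic-observation MDP is a revealing POMDP), where each $s_i$ moves to $s_{i+1}$ with probability $1/2$ and back to $s_1$ with probability $1/2$, and $g$ is absorbing. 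Then $\val_{\textnormal{BR}(\{g\})}(\1[s_1])=1$, $p_{\min}=1/2$, but the probability of reaching $g$ within $T$ steps is at most $T\,2^{-(m-1)}$, so the truncation gap at any horizon $T=\mathrm{poly}(m)$ is still $1-o(1)$, while your bound would force it to $0$. Your heuristic fails precisely where you flagged point (b): a play can be revealed at non-target, non-terminal states arbitrarily many times before success, so counting revelations (Chernoff) and counting \emph{distinct} intermediate revealed states (at most $|\S|$) does not control the number of \emph{revisits} needed before the target is hit.

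More fundamentally, the sketch never exhibits a policy that realizes the value at a definite rate; frequency of revelations alone does not rule out that progress toward $\D_\X$ is made only with probability exponentially small in $|\S|$ per window, nor that a candidate policy stalls. The paper closes exactly this gap with the reliable-action machinery: every belief admits a value-preserving action (\Cref{Result: Every belief has a reliable action}); a layered-set argument, using the minimal value drop $\zeta>0$ incurred by any unreliable action, shows that the uniform-over-reliable-actions policy is $(n,q)$-stopping with $n=|\S|+2$ and $q=\transition_{\min}^2(\transition_{\min}/|\A|)^{|\S|}$ (\Cref{Result: Existence of optimal policy}); stopping plus reliable implies optimal via a martingale/optional-stopping argument (\Cref{Result: Stopping and reliable is optimal}); and the geometric tail of the hitting time yields $T=n\lceil\log(\eps)/\log(1-q)\rceil$, which is exponential in $|\S|$ and in the bit-length of $\transition_{\min}$---still compatible with \textnormal{EXPTIME} because the grid computation tolerates an exponential $T$. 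To repair your proof you would need to replace the Chernoff-plus-distinct-states argument with something of this kind and accept an exponential, not polynomial, horizon.
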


By \Cref{Result: belief-reachabilty covers reachability}, reachability objectives reduce to belief-reachability objectives. 
Therefore, they have the following consequence.

\begin{corollary}
\label{Result: reachability is EXPTIME}
    Quantitative analysis for reachability objectives for revealing POMDPs is in \textnormal{EXPTIME}.
\end{corollary}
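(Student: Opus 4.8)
The plan is to derive the corollary directly from \Cref{Result: Approximating belief-reachability is EXPTIME} via the reduction sketched in \Cref{Result: belief-reachabilty covers reachability}. The only remaining work is to verify that this reduction (i)~stays inside the class of revealing POMDPs, (ii)~is computable in polynomial time, and (iii)~preserves the value \emph{exactly}, so that an $\eps$-approximation of the belief-reachability value in the reduced instance is simultaneously an $\eps$-approximation of the reachability value in the original instance.

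First I would fix a revealing POMDP $\POMDP$ with reachability objective $\Reach(\X)$ and apply the standard simplifications noted in \Cref{Result: belief-reachabilty covers reachability}: make every target state absorbing and merge all targets into a single absorbing state $s^*$, which changes the dynamics but neither the optimal policies nor $\val_{\textnormal{R}(\X)}$. I would then build $\next{\POMDP}$ by adjoining a fresh action together with two absorbing states $\top$ and $\bot$, each equipped with its own deterministic announcing signal, so that playing the new action sends the current state to $\top$ if it equals $s^*$ and to $\bot$ otherwise. For (i), the original transitions of $\POMDP$ are left untouched and hence still satisfy the revealing condition, while $\top$ and $\bot$ are announced deterministically and therefore induce Dirac beliefs; thus $\next{\POMDP}$ is revealing. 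For (ii), the construction adds only a constant number of states, actions, and signals, so $\next{\POMDP}$ has size polynomial in that of $\POMDP$ and is produced in polynomial time.

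The crux is (iii): establishing that $\val_{\textnormal{BR}(\{\top\})}(b_0)$ in $\next{\POMDP}$ equals $\val_{\textnormal{R}(\X)}(b_0)$ in $\POMDP$. For the $\ge$ direction, given any $\eps$-optimal reachability policy in $\POMDP$, I would play it for a horizon long enough that the probability of currently sitting in the absorbing state $s^*$ is within $\eps$ of the reachability value, and then play the new action; reaching $\top$ produces the Dirac belief $\1[\top] \in \D_{\{\top\}}$, so belief-reachability to $\top$ succeeds exactly when $s^*$ had been reached, yielding value at least $\val_{\textnormal{R}(\X)}(b_0) - \eps$ for every $\eps$. For the $\le$ direction, any play achieving belief-reachability to $\top$ must have its underlying state equal to $s^*$ at the step the new action is played, hence must have visited $\X$; projecting such a policy back to $\POMDP$ shows no policy can exceed the reachability value. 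I expect the delicate point to be precisely this horizon argument for the supremum: because $s^*$ is absorbing, waiting longer can only increase the mass on $s^*$, so the $\eps$-gap is genuinely closable and the two suprema coincide rather than merely being $\eps$-close. Finally, running the algorithm of \Cref{Result: Approximating belief-reachability is EXPTIME} on $\next{\POMDP}$ computes an $\eps$-approximation of $\val_{\textnormal{BR}(\{\top\})}(b_0)$ in \textnormal{EXPTIME}, which by the established equality is an $\eps$-approximation of $\val_{\textnormal{R}(\X)}(b_0)$, as required.
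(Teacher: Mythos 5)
Your proposal is correct and follows essentially the same route as the paper: the paper derives this corollary directly from \Cref{Result: Approximating belief-reachability is EXPTIME} via the reduction described in \Cref{Result: belief-reachabilty covers reachability} (absorbing single target, new action, announced states $\top$ and $\bot$). You simply spell out the details the paper leaves implicit — that the reduction preserves the revealing property, is polynomial, and preserves the value exactly via the two-sided inequality — all of which check out.
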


\paragraph{Results for Parity Objectives}
Our main results for parity objectives are the following, which are proved in \Cref{Section: Parity}.

\begin{restatable}{theorem}{approxparity}
\label{Result: approximating of parity value is exptime}
    Quantitative analysis for parity objectives for revealing POMDPs is in \textnormal{EXPTIME}.
\end{restatable}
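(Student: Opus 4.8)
The plan is to reduce the quantitative analysis of parity objectives to the quantitative analysis of belief-reachability objectives, which is already known to be in EXPTIME by \Cref{Result: Approximating belief-reachability is EXPTIME}. The central claim driving this reduction, as foreshadowed in the overview, is that the parity value equals the belief-reachability value to the set of almost-sure winning parity states. So the first and most important step is to identify the set $W \subseteq \S$ of states from which the parity objective can be won almost surely (i.e., $\val_{\textnormal{P}}(\1[s]) = 1$). By the previously established result of \cite{bellyRevelationsDecidableClass2025}, almost-sure analysis for parity objectives in revealing POMDPs is EXPTIME-complete, and finite-memory policies suffice; thus $W$ is computable in EXPTIME, and moreover there is a single (finite-memory) policy $\policy_W$ that wins almost surely from every Dirac belief $\1[s]$ with $s \in W$.

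Given $W$, the plan is to prove the value identity $\val_{\textnormal{P}}(b_0) = \val_{\textnormal{BR}(W)}(b_0)$. For the inequality $\val_{\textnormal{P}} \ge \val_{\textnormal{BR}(W)}$, I would combine a near-optimal belief-reachability policy with $\policy_W$: play the belief-reaching policy until the belief collapses to some $\1[s]$ with $s \in W$ — an event that, crucially, is a concrete observable moment because the revealing structure makes belief-collapse to a Dirac on an announced state detectable — and then switch to $\policy_W$, which secures the parity objective with probability $1$ from that point. Since the tail behavior of a play determines the parity outcome, any finite prefix spent reaching $W$ is irrelevant, so this composite policy achieves probability at least $\val_{\textnormal{BR}(W)}(b_0) - \eps$. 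For the reverse inequality $\val_{\textnormal{P}} \le \val_{\textnormal{BR}(W)}$, the key observation is that any play satisfying the parity objective with positive ``residual'' chance must, with the relevant probability, eventually reach a belief that is almost-sure winning; intuitively, along a parity-winning play the belief dynamics must reach the winning region, and the revealing property guarantees the belief collapses to a Dirac infinitely often, so reaching $W$ in the belief-reachability sense is necessary to win parity with high probability.

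**The hard part will be** establishing the reverse inequality rigorously, i.e., showing that one cannot win the parity objective with probability exceeding the belief-reachability value to $W$. The subtlety is that a play could in principle satisfy the parity condition without the belief ever collapsing onto a state of $W$ — one must rule this out using the revealing structure. The argument I would pursue is that because every visited state is announced with positive probability, along almost every infinite play the belief collapses to a Dirac measure infinitely often (a Borel--Cantelli style argument over the revelations). At each such collapse onto $\1[s]$, if $s \notin W$ then by definition the parity objective cannot be won almost surely from $\1[s]$, and one must argue this uniform failure gap propagates to give a bound strictly below $1$ on the conditional parity value; patching these conditional bounds together shows the parity-winning probability is controlled by the probability of ever collapsing onto $W$, which is exactly $\val_{\textnormal{BR}(W)}$. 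Making the ``uniform gap'' and the patching precise — essentially a martingale or sub-probability argument on the sequence of revelation events — is the technical crux.

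Finally, once the identity $\val_{\textnormal{P}}(b_0) = \val_{\textnormal{BR}(W)}(b_0)$ is secured, the complexity bound is immediate: computing $W$ takes EXPTIME by \cite{bellyRevelationsDecidableClass2025}, and then approximating $\val_{\textnormal{BR}(W)}(b_0)$ to additive error $\eps$ takes EXPTIME by \Cref{Result: Approximating belief-reachability is EXPTIME}. Composing two EXPTIME procedures stays in EXPTIME, giving the desired quantitative analysis for parity objectives in EXPTIME. The existence of optimal (as opposed to merely $\eps$-optimal) parity policies, mentioned in the overview, would follow by a separate compactness or limit argument over the $\eps$-optimal composite policies, but the complexity statement of the theorem itself needs only the approximation.
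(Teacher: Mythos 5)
Your overall architecture (compute the almost-sure winning set $W$ in \textnormal{EXPTIME} via \cite{bellyRevelationsDecidableClass2025}, prove $\val_{\textnormal{P}} = \val_{\textnormal{BR}(W)}$, then invoke \Cref{Result: Approximating belief-reachability is EXPTIME}) matches the paper, and your easy direction $\val_{\textnormal{P}} \ge \val_{\textnormal{BR}(W)}$ (reach a Dirac on $W$, then switch to an almost-sure winning policy; prefixes do not affect parity) is exactly right. However, your argument for the reverse inequality has a genuine gap at the step you yourself flag as the crux. You need that every Dirac collapse onto $s \notin W$ leaves a \emph{uniform} failure gap, i.e., $\val_{\textnormal{P}}(\1[s]) \le 1 - \gamma$ for some $\gamma > 0$. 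But $s \notin W$ only says no single policy wins almost surely from $\1[s]$; it does not rule out $\val_{\textnormal{P}}(\1[s]) = 1$ via a sequence of increasingly good policies (limit-sure but not almost-sure winning), which is exactly what happens in general POMDPs. The statement ``$s \notin W \implies \val_{\textnormal{P}}(\1[s]) < 1$'' is precisely the content of \Cref{Result: limit-sure-1}, which the paper \emph{derives from} the reduction lemma you are trying to prove; assuming it ``by definition'' is circular. Without that gap, your Borel--Cantelli/martingale patching over revelation times has nothing to patch.

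The paper closes the reverse direction by a different mechanism: end components of the underlying MDP on $\S \times \Z$. For an arbitrary policy $\policy$, almost every play settles into some end component (\Cref{Result: End components}); whether parity holds depends only on which end component is reached; and the technical work (\Cref{Result: Existence of almost-sure saftey policy for an end component in general PODMPs} and \Cref{Result: Existence of almost-sure saftey policy for an end component}, the latter using the revealing property to re-collapse the belief and cycle through all state--action pairs of the component) shows that every state of a reachable parity-winning end component lies in $W$. Hence $\PP^{\policy}(\Parity) = \PP^{\policy}(\exists t \; B_t \in \D_W)$ for every policy, with no need for a value gap on $\S \setminus W$. If you want to keep your revelation-based route, you would first have to establish the absence of limit-sure-but-not-almost-sure Dirac beliefs by some independent argument, and the natural such argument is essentially the end-component analysis anyway.
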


\Cref{Result: approximating of parity value is exptime} generalizes \Cref{Result: reachability is EXPTIME} to parity objectives, and follows from a reduction of parity objectives to belief-reachability to a set that can be computed in \textnormal{EXPTIME}, see \Cref{Result: Reduction of parity to belief-reachability}.

\begin{restatable}{theorem}{parityoptimal}
\label{Result: Revealing POMDPs with parity objectives have 0-optimal policy}
    Optimal policies exist for parity objectives for revealing POMDPs.
\end{restatable}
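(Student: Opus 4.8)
The plan is to leverage the reduction of parity to belief-reachability. Let $W \subseteq \S$ be the set of almost-sure winning states for parity, i.e.\ $W \defas \{ s \in \S : \val_{\textnormal{P}}(\1[s]) = 1 \}$; by \Cref{Result: Reduction of parity to belief-reachability} we have $\val_{\textnormal{P}}(b) = \val_{\textnormal{BR}(W)}(b)$ for every belief $b$. From the almost-sure parity analysis for revealing POMDPs there is a finite-memory policy $\policy_W \in \Policies$ that wins parity with probability $1$ from every Dirac belief $\1[s]$ with $s \in W$. Consequently, it suffices to produce a policy attaining the belief-reachability value to $W$: such a policy, concatenated with $\policy_W$ started at the first step where the belief becomes a Dirac measure on a state of $W$, is $0$-optimal for parity, since on the event $\BeliefReach(W)$ it wins parity surely and this event already carries the optimal probability $\val_{\textnormal{BR}(W)}(b_0) = \val_{\textnormal{P}}(b_0)$.

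The structural tool is \emph{revelation recurrence}. Because the POMDP is revealing, whenever a state $\next{s}$ is a possible successor it is announced with positive probability, so there is a uniform $\eta > 0$ lower-bounding the per-step probability that the belief collapses to a Dirac measure. A Borel--Cantelli argument then forces the belief to become Dirac infinitely often, almost surely, under every policy. Writing $\tau_0 < \tau_1 < \cdots$ for these revelation times and $R_k \defas S_{\tau_k}$ for the announced states, $(R_k)_k$ is a process on the finite set $\S$, and $\BeliefReach(W)$ holds exactly when $(R_k)_k$ ever enters $W$. This recasts belief-reachability as a reachability question on the revelation epochs.

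It remains to show this value is \emph{attained}, which is the crux. The contrast to keep in mind is that general POMDP reachability admits only $\eps$-optimal policies, the value being approached but not reached (the limit-sure-without-almost-sure phenomenon). Two features rule this out here. First, $W$ is absorbing for the win: once $(R_k)_k$ enters $W$ the controller switches to $\policy_W$ and succeeds surely, so no precision is lost after the target is hit. Second, revelation recurrence grants infinitely many fresh attempts to steer the next announced state into $W$. The plan is to take $\eps_n$-optimal policies with $\eps_n \downarrow 0$, and to distill from their behaviour at revelation epochs a policy that, from each announced $s \notin W$, plays a continuation achieving the supremum $\val_{\textnormal{BR}(W)}(\1[s])$ up to the next revelation; recurrence then guarantees that iterating such locally optimal continuations accumulates the full reachability probability rather than leaking it to the event that $(R_k)_k$ stays in $\S \setminus W$ forever. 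The main obstacle is formalising attainment of the per-epoch supremum: the available ``macro-actions'' are inter-revelation sub-policies ranging over an infinite set, so one must argue, via compactness of $\Policies$ together with upper semicontinuity of $\val_{\textnormal{BR}(W)}$ and the uniform geometric tail on inter-revelation times, that an optimal continuation exists and that the resulting revelation-indexed policy is genuinely $0$-optimal rather than merely $\eps$-optimal.

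Assembling the pieces yields the optimal parity policy: follow the attained belief-reachability-to-$W$ policy, and at the first revelation epoch landing in $W$ switch to $\policy_W$. The equality $\val_{\textnormal{P}} = \val_{\textnormal{BR}(W)}$ together with sure winning after the switch certifies $0$-optimality.
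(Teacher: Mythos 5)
Your overall architecture matches the paper's: reduce parity to belief-reachability to the almost-sure winning set via \Cref{Result: Reduction of parity to belief-reachability}, obtain an optimal policy for that belief-reachability objective, and concatenate it with almost-sure winning parity policies at the first revelation into $W$. The concatenation step and the use of the reduction are fine. The gap is exactly where you flag it: attainment of the belief-reachability value. You leave this as a plan (distill a limit from $\eps_n$-optimal policies via ``compactness of $\Policies$ together with upper semicontinuity of $\val_{\textnormal{BR}(W)}$''), and that route does not go through as sketched. The map $\policy \mapsto \PP^{\policy}_{b}(\BeliefReach(W))$ is only lower semicontinuous in the policy (it is a supremum of the continuous finite-horizon probabilities), so compactness of the policy space does not produce a maximizer --- this is precisely the mechanism behind the limit-sure-without-almost-sure phenomenon in general POMDPs. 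Likewise, even if each inter-revelation ``macro-action'' preserves the value in expectation, that only makes $\val_{\textnormal{BR}(W)}(B_{\tau_k})$ a martingale along revelation epochs; a bounded martingale need not be absorbed at $\{0,1\}$, i.e., the revelation chain could wander forever among non-terminal Dirac beliefs of positive value, and your claim that ``recurrence then guarantees\ldots{} accumulates the full reachability probability'' asserts exactly the point that needs proof. Revelation recurrence gives infinitely many Dirac beliefs, not infinitely many Dirac beliefs on \emph{terminal} states.

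The paper closes this gap with \Cref{Result: Optimal policy for reachability}, which you could simply have invoked. Its content is the structural fact your proposal is missing: the policy that plays \emph{reliable} (value-preserving) actions uniformly at random is $(n,q)$-stopping (\Cref{Result: Existence of optimal policy}), proved by a layered-set argument showing that every state can reach a terminal Dirac belief using only reliable actions --- otherwise some state of maximal value in the uncovered set would admit no $\eps$-optimal policy for small $\eps$, a contradiction. Stopping plus reliable then yields optimality by an optional-stopping argument (\Cref{Result: Stopping and reliable is optimal}): the uniform positive probability $q$ of hitting a terminal Dirac within $n$ steps forces the value martingale to be absorbed, so no probability ``leaks.'' Without this stopping property (or a citation to the corollary), your proof of attainment is incomplete.
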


The following results follow from \cite[Theorem 3]{bellyRevelationsDecidableClass2025} and \Cref{Result: Revealing POMDPs with parity objectives have 0-optimal policy}.

\begin{restatable}{corollary}{limitsure}
\label{Result: limit-sure-1}
    For revealing POMDPs with parity objectives, limit-sure and almost-sure winning coincide, and limit-sure analysis is in \textnormal{EXPTIME-complete}.
\end{restatable}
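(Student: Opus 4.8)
The plan is to derive the corollary as a direct consequence of the existence of optimal policies (\Cref{Result: Revealing POMDPs with parity objectives have 0-optimal policy}) together with the known \textnormal{EXPTIME}-completeness of almost-sure analysis \cite[Theorem 3]{bellyRevelationsDecidableClass2025}. The statement makes two assertions, and I would treat them separately.

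First I would show that limit-sure and almost-sure winning coincide. One inclusion is immediate: if a policy $\policy$ satisfies $\PP_{b_0}^{\policy}(\play \in \Parity) = 1$, then trivially $\PP_{b_0}^{\policy}(\play \in \Parity) \ge 1 - \eps$ for every $\eps > 0$, so almost-sure winning implies limit-sure winning. For the reverse implication I would invoke optimality. By definition, limit-sure winning at $b_0$ means exactly $\val_{\textnormal{P}}(b_0) = 1$. \Cref{Result: Revealing POMDPs with parity objectives have 0-optimal policy} provides an optimal policy $\policy^*$, that is, one satisfying $\PP_{b_0}^{\policy^*}(\play \in \Parity) = \val_{\textnormal{P}}(b_0)$. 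Substituting $\val_{\textnormal{P}}(b_0) = 1$ yields $\PP_{b_0}^{\policy^*}(\play \in \Parity) = 1$, so $\policy^*$ witnesses almost-sure winning. Hence the two notions coincide.

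Second I would transfer the complexity bounds. Since limit-sure winning is precisely the question of whether $\val_{\textnormal{P}}(b_0) = 1$, and this has the same yes/no answer as the almost-sure problem by the coincidence just established, the two decision problems are identical. Consequently the \textnormal{EXPTIME} algorithm for almost-sure analysis from \cite[Theorem 3]{bellyRevelationsDecidableClass2025} also decides limit-sure winning, giving the upper bound; and the \textnormal{EXPTIME}-hardness reduction for almost-sure analysis applies verbatim to limit-sure analysis, giving the lower bound. Together these yield \textnormal{EXPTIME}-completeness.

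The argument is short precisely because all the difficulty has been absorbed into the prerequisites. The genuine obstacle lies not in this corollary but in \Cref{Result: Revealing POMDPs with parity objectives have 0-optimal policy}: without a value-\emph{achieving} policy, as opposed to a merely $\eps$-optimal one, the reverse inclusion could fail, and it does fail for general POMDPs, where limit-sure and almost-sure winning are genuinely distinct. The only point I would treat with care is ensuring that the cited optimality result is applied at the specific initial belief $b_0$, so that the value $1$ is attained rather than only approached in the limit.
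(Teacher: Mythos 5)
Your proposal is correct and follows essentially the same route as the paper's own proof: both directions of the coincidence are obtained exactly as the paper does (almost-sure implies limit-sure by definition, and the converse via the optimal policy from \Cref{Result: Revealing POMDPs with parity objectives have 0-optimal policy} applied at $b_0$), with the complexity then transferred from \cite[Theorem 3]{bellyRevelationsDecidableClass2025}. No gaps.
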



\section{Belief-Reachability Objectives}
\label{Section: Reachability}

In this section, we prove \Cref{Result: Approximating belief-reachability is EXPTIME}, i.e., that the quantitative analysis for belief-reachability objectives is in \textnormal{EXPTIME}. 
We proceed in five steps:
\begin{itemize}
    \item 
        We introduce reliable actions, which preserve the current belief-reachability value. 
        \Cref{Result: Every belief has a reliable action} proves that every belief has a reliable action.
    \item 
        \Cref{Result: Stopping and reliable is optimal} shows that stopping policies that play only reliable actions are optimal.
    \item 
        \Cref{Result: Existence of optimal policy} shows that playing reliable actions uniformly at random is stopping.
    \item 
        \Cref{Result: Stopping optimal policy approximate the value} upper bounds the horizon needed to approximate the belief-reachability value, using stopping optimal policies.
    \item 
        \Cref{Result: Approximation of the T step reachability value} presents a point-based dynamic programming algorithm to approximate the finite-horizon belief-reachability value. 
\end{itemize}

\begin{definition}[Reliable Action]
    Consider a POMDP with belief-reachability objectives. 
    An action $a\in \A$ is \emph{reliable} if it preserves the belief-reachability value.
    Formally, for each belief $b \in \Delta(\S)$, define the set of reliable actions $\R(b)$ by
    \[
       \R(b) \defas \left\{ a \in \A : \EE^a_{b}(\val_{\textnormal{BR}(\X)}(B_1)) = \val_{\textnormal{BR}(\X)}(b) \right\} \,.
    \]
\end{definition}

Because the set of actions $\A$ is finite, we have the following result.

\begin{restatable}{proposition}{existencereliable}
\label{Result: Every belief has a reliable action}
    The set of reliable actions is nonempty.
\end{restatable}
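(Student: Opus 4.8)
The plan is to establish a Bellman optimality equation for the belief-reachability value and then exploit the finiteness of $\A$. Concretely, I would prove that for every belief $b \notin \D_{\X}$,
\[
    \val_{\textnormal{BR}(\X)}(b) = \max_{a \in \A} \EE^a_b\bigl(\val_{\textnormal{BR}(\X)}(B_1)\bigr),
\]
while for $b \in \D_{\X}$ one has $\val_{\textnormal{BR}(\X)}(b) = 1$. Granting this identity, the maximum on the right is taken over the finite set $\A$ and is therefore attained at some $a^*$; this $a^*$ satisfies $\EE^{a^*}_b(\val_{\textnormal{BR}(\X)}(B_1)) = \val_{\textnormal{BR}(\X)}(b)$, i.e., $a^* \in \R(b)$, so $\R(b) \neq \emptyset$. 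Note that $B_1$ ranges over only finitely many beliefs (one Bayesian update per signal $z \in \Z$), so each expectation $\EE^a_b(\cdot)$ is a finite sum and is well defined.

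For the two inequalities behind the displayed identity (with $b \notin \D_{\X}$), I would argue as follows. For the lower bound, fix $a \in \A$ and $\eps > 0$, and consider the policy that plays $a$ at the first step and then, after observing a signal $z$ leading to the updated belief $b_z$, follows an $\eps$-optimal policy for $\BeliefReach(\X)$ from $b_z$. Since $b \notin \D_{\X}$, the objective cannot be met at time $0$, so the continuation objective from $B_1$ is exactly $\BeliefReach(\X)$ started at $B_1$; hence this policy attains at least $\EE^a_b(\val_{\textnormal{BR}(\X)}(B_1)) - \eps$. Letting $\eps \to 0$ and maximizing over $a$ gives ``$\geq$''. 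For the upper bound, decompose an arbitrary policy $\policy \in \Policies$ according to its (possibly randomized) first action: conditioned on first playing $a$ and observing $z$, the probability of satisfying $\BeliefReach(\X)$ is at most $\val_{\textnormal{BR}(\X)}(b_z)$, because the value is the supremum over all continuations. Averaging over $a$ and $z$ yields $\PP^\policy_b(\BeliefReach(\X)) \le \max_{a \in \A}\EE^a_b(\val_{\textnormal{BR}(\X)}(B_1))$, and taking the supremum over $\policy$ gives ``$\leq$''.

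It remains to dispose of the case $b \in \D_{\X}$, i.e., $b = \1[s]$ with $s \in \X$. Here the objective is already satisfied at time $0$, so $\val_{\textnormal{BR}(\X)}(b) = 1$; the point to check is that some action is still reliable. Using the convention (justified in the remark on the generality of belief-reachability) that target states are absorbing, the state never leaves $s$, so for every action the successor belief is again $\1[s] \in \D_{\X}$, and thus $\EE^a_b(\val_{\textnormal{BR}(\X)}(B_1)) = 1 = \val_{\textnormal{BR}(\X)}(b)$; hence every action is reliable and $\R(b) = \A \neq \emptyset$. I expect the main obstacle to be the bookkeeping around the time-$0$ term: the lower-bound policy construction and the ``shift'' identity both hinge on whether $b$ already lies in $\D_{\X}$, and the two cases must be separated cleanly. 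The absorbing-target convention is precisely what prevents the $\D_{\X}$ case from producing successor beliefs of value strictly below $1$, which would otherwise break the claim.
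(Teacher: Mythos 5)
Your proposal is correct and follows essentially the same route as the paper: the paper's proof simply invokes the Bellman optimality equation $\val_{\textnormal{BR}(\X)}(b) = \max_{a \in \A} \EE^{a}_{b}\bigl(\val_{\textnormal{BR}(\X)}(B_1)\bigr)$ and concludes by finiteness of $\A$. Your version is more careful on two points the paper glosses over --- you actually establish both inequalities of the Bellman identity, and you separate out the case $b \in \D_{\X}$, where the identity requires the absorbing-target convention --- but the core argument is identical.
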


\begin{definition}[Terminal State]
    Consider a POMDP and a set of target states $\X \subseteq \S$. 
    A state $s \in \S$ is called \emph{terminal} (for $\X$) if $s \in \X$
    or if $\X$ cannot be reached from $s$ with positive probability, i.e., $\sup_{\policy\in \Policies} \PP_{\1[s]}^{\policy}(\exists t \ge 0 \quad \supp(B_t) \cap \X \not = \emptyset ) = 0$. 
    The set of terminal states is denoted by $\T$.
\end{definition}

\begin{definition}[Stopping Policy]
    Consider a POMDP, a set of target states $\X \subseteq \S$, and a corresponding set of terminal states $\T \subseteq \S$. 
    For $n \in \NN$ and $q > 0$, a policy $\policy\in \Policies$ is called $(n, q)$-\emph{stopping} (for $\X$) if within $n$ steps the belief collapses to a Dirac mass on a terminal state with probability at least $q$.
    Formally, from every initial belief $b \in \Delta(\S)$,
    \[
        \PP_b^\policy( \exists t \le n \quad B_t \in \D_{\T}) \ge q \,.
    \]
    We say that a policy is stopping if it is stopping for some $n$ and $q > 0$.
\end{definition}

Similar to \cite[Lemma 5]{anderssonComplexitySolvingStochastic2009} that considers finite duration games, it is easy to see that, if a policy is stopping and uses only reliable actions, then it is optimal.

\begin{restatable}{lemma}{stoppingandreliable}
\label{Result: Stopping and reliable is optimal}
    Consider a POMDP with belief-reachability objectives.
    If a policy is stopping and uses only reliable actions, then it is optimal.
\end{restatable}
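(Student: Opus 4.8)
The plan is to recognize $W_t \defas \val_{\textnormal{BR}(\X)}(B_t)$ as a bounded martingale that is eventually absorbed in $\{0,1\}$, and to conclude by optional stopping. Fix an initial belief $b$ and a policy $\policy$ that is $(n,q)$-stopping and plays only reliable actions, and let $\F_t$ be the natural filtration generated by the first $t$ actions and signals, so that $B_t$, and hence $W_t$, are $\F_t$-measurable. First I would establish the martingale property: since $\policy$ plays only reliable actions, every action $A_t$ it selects lies in $\R(B_t)$, and by definition each such action satisfies $\EE^{A_t}_{B_t}(\val_{\textnormal{BR}(\X)}(B_1)) = \val_{\textnormal{BR}(\X)}(B_t)$. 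Averaging over the (possibly randomized) choice of $A_t$ and using that the one-step belief update depends only on $(B_t, A_t)$ gives $\EE^\policy_b[W_{t+1}\mid \F_t] = W_t$; as $W_t \in [0,1]$, the process $(W_t)_{t \ge 0}$ is a bounded martingale.

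Next I would control the absorption time $\tau \defas \inf\{t \ge 0 : B_t \in \D_\T\}$, a stopping time because $\{B_t \in \D_\T\} \in \F_t$. Using the Markov property of the belief dynamics together with the stopping bound, which holds from every belief, and conditioning successively at times $n, 2n, \dots$, I would obtain the geometric tail $\PP^\policy_b(\tau > kn) \le (1-q)^k$, so that $\tau < \infty$ almost surely. I would then read off the value at the terminal Diracs: if $s \in \X$ then $\1[s] \in \D_\X$ already, so $\val_{\textnormal{BR}(\X)}(\1[s]) = 1$; while if $s \in \T \setminus \X$ the defining property of terminal states forces $\val_{\textnormal{BR}(\X)}(\1[s]) = 0$, since $\D_\X$ can never be entered from $\1[s]$. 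Writing $B_\tau = \1[s_\tau]$, this means $W_\tau = \1[s_\tau \in \X] \in \{0,1\}$.

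The two ingredients combine as follows. Because $\X \subseteq \T$, the belief cannot enter $\D_\X$ strictly before $\tau$, and if $s_\tau \notin \X$ then $\X$ is unreachable thereafter, so $\D_\X$ is never entered; hence on the almost-sure event $\{\tau < \infty\}$ we have $\{\play \in \BeliefReach(\X)\} = \{B_\tau \in \D_\X\} = \{W_\tau = 1\}$, and therefore $\PP^\policy_b(\play \in \BeliefReach(\X)) = \EE^\policy_b[W_\tau]$. Finally, the stopped process $W_{t \wedge \tau}$ is a bounded martingale, so $\EE^\policy_b[W_{t\wedge\tau}] = W_0 = \val_{\textnormal{BR}(\X)}(b)$ for every $t$; letting $t \to \infty$ and invoking $\tau < \infty$ almost surely together with dominated convergence yields $\EE^\policy_b[W_\tau] = \val_{\textnormal{BR}(\X)}(b)$. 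Chaining the two equalities gives $\PP^\policy_b(\play \in \BeliefReach(\X)) = \val_{\textnormal{BR}(\X)}(b)$, i.e.\ $\policy$ is optimal.

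The main obstacle is the finiteness of $\tau$. The stopping hypothesis is phrased from every initial belief, so iterating it across successive length-$n$ blocks implicitly requires the continuation of $\policy$ to again be stopping from the reached belief; this is automatic for belief-based policies (such as the uniformly-random-reliable policy used in the subsequent lemmas) via the Markov property of the belief process, and is exactly what produces the geometric tail. The rest is careful bookkeeping to ensure the value martingale is genuinely absorbed at $\{0,1\}$ and that the null event $\{\tau = \infty\}$ carries no belief-reachability mass, so that optional stopping transfers $\val_{\textnormal{BR}(\X)}(b)$ into the realized winning probability.
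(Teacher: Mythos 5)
Your proof is correct and follows essentially the same route as the paper's: the reliable-action hypothesis makes $\val_{\textnormal{BR}(\X)}(B_t)$ a bounded martingale, the $(n,q)$-stopping hypothesis gives the geometric tail $\PP^\policy_b(\tau > kn) \le (1-q)^k$ and hence $\tau < \infty$ almost surely, and optional stopping at the terminal Diracs (where the value is $\1[s_\tau \in \X]$) yields optimality. Your explicit remarks on why $W_\tau \in \{0,1\}$ and on the block-iteration subtlety are slightly more careful than the paper's write-up, but the argument is the same.
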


\begin{proof}[Proof Sketch]
    Consider a stopping policy $\policy\in \Policies$ which uses only reliable actions. 
    Because every action chosen by $\policy$ is reliable, the belief‑reachability values $\val_{\textnormal{BR}}(B_t)$ forms a martingale, i.e., $\val_{\textnormal{BR}}(b_0) = \EE_{b_0}^{\policy} \bigl(\val_{\textnormal{BR}}(B_t)\bigr)$. 
    Moreover, since the policy $\policy$ is stopping, it reaches a Dirac belief on a terminal state in finite time with probability~1 and the hitting time $\tau$ of $\D_\T$ is almost-surely finite. 
    Therefore, we have $\val_{\textnormal{BR}}(b_0) = \EE_{b_0}^{\policy} \bigl(\val_{\textnormal{BR}}(B_\tau)\bigr)$.
\end{proof}

\paragraph{Minimal Positive Transition}
Denote $\transition_{\min}$ the minimum non-zero probability in the transition function, i.e., 
\begin{align*}
    \transition_{\min}\defas 
        \min \Bigl \{
                \transition(s, a)(\next{s}, z)
                \;:\;
                \forall s,\, \next{s} \in \S,\;
                a \in \A,\;
                z \in \Z \quad
                \transition(s, a)(\next{s}, z) > 0
            \Bigr \}.
\end{align*}

\begin{restatable}{lemma}{optimalpolicyreach}
\label{Result: Existence of optimal policy}
    Consider a revealing POMDP with belief-reachability objectives.
    The policy $\policy$ that plays reliable actions uniformly at random is $(n, q)$-stopping with parameters $n \defas |\S| + 2$ and $q \defas \transition_{\min}^2 \left( \transition_{\min}/|\A| \right)^{|\S|}$.
    
\end{restatable}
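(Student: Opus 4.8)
The plan is to exhibit, from an arbitrary initial belief $b$, a single ``revelation--navigation'' scenario under the uniform‑over‑reliable policy $\policy$ that drives the belief onto a Dirac mass at a terminal state within $|\S|+2$ steps, and to bound its probability below by $q$. The scenario has three parts. \emph{(i) Initial revelation.} From any state and under any action, the revealing condition guarantees that every possible successor $\next{s}$ satisfies $\transition(s,a)(\next{s},\next{s})>0\ge\transition_{\min}$, so the probability that the next signal lies in $\S$ is at least $\transition_{\min}$; averaging over $S_0\sim b$, after one step the belief becomes a Dirac $\1[s_1]$ with probability at least $\transition_{\min}$. \emph{(ii) Navigation.} From $\1[s_1]$ I follow a fixed path of states, playing at each step the prescribed reliable action and conditioning on the state being revealed, so the belief stays Dirac; each such step contributes a factor at least $\tfrac{1}{|\A|}\transition_{\min}$ (selecting the prescribed action among the at most $|\A|$ reliable ones, times the probability $\transition_{\min}$ of the prescribed revealing transition). \emph{(iii) Termination.} The path is chosen so that after at most $|\S|$ such steps it reaches a Dirac on a terminal state. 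Multiplying the factors gives probability at least $\transition_{\min}(\transition_{\min}/|\A|)^{|\S|}\ge q$ within $1+|\S|\le|\S|+2$ steps, which is exactly the required $(n,q)$‑stopping bound.

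The whole argument therefore reduces to one combinatorial claim: \emph{from every state one can reach a terminal state using only reliable, revealing transitions.} Formally, put an edge $s\to\next{s}$ whenever some $a\in\R(\1[s])$ has $\transition(s,a)(\next{s},\next{s})>0$; by the revealing condition this holds iff $\next{s}$ is a possible successor of $(s,a)$, so the graph records precisely the one‑step reachable states under reliable actions. Target states and dead states are already terminal, so only the live non‑target states $s$ (those with $\val_{\textnormal{BR}}(\1[s])>0$) need work.

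For these I would induct on $\val_{\textnormal{BR}}(\1[s])$ in decreasing order, using that $\val_{\textnormal{BR}}$ is convex in the belief (standard for reachability‑type POMDP values, e.g.\ via the target‑state reduction of \Cref{Result: belief-reachabilty covers reachability}). Fix a live non‑target $s$ with value $v$, and pick a reliable action $a^\ast$, so $\EE^{a^\ast}_{\1[s]}(\val_{\textnormal{BR}}(B_1))=v$. Every successor belief is supported on the successor states, so by convexity its value is at most the largest value among the revealed successors. If some revealed successor $\next{s}$ has $\val_{\textnormal{BR}}(\1[\next{s}])>v$ (for instance a target, of value $1$), then by the induction hypothesis $\next{s}$ reaches $\T$, and the reliable edge $s\to\next{s}$ finishes $s$. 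Otherwise all successor values are $\le v$, and since their $a^\ast$‑average equals $v$ they must all equal $v$; thus every reliable action from $s$ stays among value‑$v$ live states, leaving me with the set $M_v$ of value‑$v$ live non‑target states on which no reliable action ever raises the value.

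The heart of the proof---and the step I expect to be the main obstacle---is to rule out $M_v$ being closed under reliable edges. The intuition is that any action that \emph{realizes} the value $v>0$, i.e.\ actually routes probability toward a Dirac on $\X$, has expected continuation value equal to $v$ and is therefore reliable; so an optimal routing cannot consist solely of value‑preserving moves trapped in the target‑free set $M_v$. I would make this precise as follows: $\val_{\textnormal{BR}}(B_t)$ is a bounded supermartingale under \emph{every} policy (each action satisfies $\EE^a_b(\val_{\textnormal{BR}}(B_1))\le\val_{\textnormal{BR}}(b)$) and equals $1$ once a Dirac on $\X$ is hit. If $M_v$ were closed under reliable edges, then along the positive‑probability plays of any policy attaining reachability probability at least $v-\eps$, with $\eps$ smaller than the gap between $v$ and the next one‑step value at the relevant Diracs, the supermartingale would be forced to remain a martingale, hence to use only reliable actions; but those keep the play inside the target‑free $M_v$, giving reachability probability $0$, contradicting $v>0$. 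Therefore some state of $M_v$ carries a reliable edge to a strictly higher value (handled by the induction hypothesis) or directly to $\T$, and propagating this conclusion back along the reliable edges inside $M_v$ places all of $M_v$ in the terminal‑reaching set. This settles the combinatorial claim and, together with the counting above, the lemma.
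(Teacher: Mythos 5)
Your proposal is correct and follows essentially the same route as the paper: reduce the lemma to the combinatorial claim that every state reaches a terminal state through reliable, revealing transitions; prove that claim by contradiction using the strictly positive gap $\zeta$ that unreliable actions incur, together with an $\eps$-optimal policy for $\eps<\zeta$; and then multiply the per-step factors $\transition_{\min}/|\A|$ and the revelation factors to get the $(n,q)$ bound (your accounting gives $1+|\S|$ steps and $\transition_{\min}(\transition_{\min}/|\A|)^{|\S|}$, which dominates the stated $n$ and $q$, so that part is fine). The one place you genuinely deviate is in how you organize the combinatorial claim: you induct on value levels and, to handle non-Dirac successor beliefs, invoke convexity of $\val_{\textnormal{BR}}$, citing it as standard via \Cref{Result: belief-reachabilty covers reachability}. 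That citation points the wrong way (the remark reduces reachability \emph{to} belief-reachability, not conversely), and convexity is not immediate here: the event $\BeliefReach(\X)$ is defined through the belief process and hence depends on the initial belief, so $\PP^{\policy}_{b}(\cdot)$ applied to it is not linear in $b$ the way it is for state-based objectives. The inequality you actually need, $\val_{\textnormal{BR}}(b)\le\sum_{s}b(s)\,\val_{\textnormal{BR}}(\1[s])$, is nevertheless true --- conditioned on $S_0=s$, the support of the belief computed from prior $\1[s]$ is contained in the support of the belief computed from prior $b$, so any Dirac hit under prior $b$ is also a Dirac hit under prior $\1[s]$ --- but it requires this separate information-refinement argument, which you should supply rather than wave at. The paper sidesteps the issue entirely by taking the \emph{maximum-value} state of the uncovered set $\L$ and showing that any $\eps$-optimal policy started from its Dirac belief must eventually play an unreliable action at a state of $\L$, losing at least $\zeta$; that argument only compares values at Dirac beliefs and never needs the convexity bound. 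So: same proof in substance, with one extra ingredient on your side that is true but currently unjustified.
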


\begin{proof}[Proof Sketch]
    By \Cref{Result: Every belief has a reliable action}, every belief has a reliable action, so the policy $\policy$ is well-defined. 
    Build the layered sets $\S_0 \defas T$, $\S_{t+1} \defas \{s : \exists a \in \R(\1[s]) \text{ with a positive‑probability move to } \S_t\}$. 
    The sequence covers all states after at most $|\S|$ iterations, i.e., $\S_{|\S|}=\S$, because, if there were states outside, then they require to use unreliable actions to get to terminal states, which lowers their value and forms a contradiction. 
    Under the policy $\policy$ that plays every reliable action uniformly, (a)~the first ``reveal'' of the Dirac belief happens with probability $\transition_{\min}$; (b)~each step toward the next layer then succeeds with probability at least $\transition_{\min}/|\A|$; and (c)~the final ``reveal'' of the Dirac belief happens with probability $\transition_{\min}$. 
    Hence, within $n \defas |\S|+2$ steps, a Dirac belief is reached on a terminal state with probability $q \defas \transition_{\min}^2 \left(\transition_{\min}/|\A| \right)^{|\S|}$, which proves that $\policy$ is $(n,q)$-stopping.
\end{proof}

We deduce directly from \Cref{Result: Stopping and reliable is optimal} and \Cref{Result: Existence of optimal policy} the next result.

\begin{corollary}
\label{Result: Optimal policy for reachability}
    Every revealing POMDP with belief-reachability objectives has an optimal policy.
\end{corollary}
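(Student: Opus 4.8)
The plan is to exhibit a single concrete policy and show it is simultaneously stopping and uses only reliable actions; by \Cref{Result: Stopping and reliable is optimal} this immediately forces optimality. The natural candidate is the policy $\policy$ that, from every belief, plays the reliable actions uniformly at random, which is precisely the policy analyzed in \Cref{Result: Existence of optimal policy}.

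First I would check that $\policy$ is well-defined. This requires that the set of reliable actions $\R(b)$ be nonempty at every belief $b \in \Delta(\S)$, which is exactly the content of \Cref{Result: Every belief has a reliable action}. Hence the uniform distribution over $\R(b)$ is meaningful at each step, and $\policy$ is a legitimate history-dependent randomized policy.

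Next, I would invoke \Cref{Result: Existence of optimal policy}, which states that this $\policy$ is $(n,q)$-stopping with $n \defas |\S| + 2$ and $q \defas \transition_{\min}^2 (\transition_{\min}/|\A|)^{|\S|} > 0$; in particular $\policy$ is stopping in the sense of the definition. Since by construction $\policy$ plays only reliable actions, both hypotheses of \Cref{Result: Stopping and reliable is optimal} are satisfied. Applying that lemma yields that $\policy$ is optimal for the belief-reachability objective, which establishes the corollary.

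The main point to stress is that there is no real obstacle in this final synthesis: the combination is a one-line deduction once the two lemmas are in hand. All the genuine work resides in the preceding results — the martingale argument underlying \Cref{Result: Stopping and reliable is optimal} (that reliable actions make $\val_{\textnormal{BR}}(B_t)$ a martingale, while stopping makes the hitting time $\tau$ of $\D_\T$ almost-surely finite, so that $\val_{\textnormal{BR}}(b_0) = \EE_{b_0}^{\policy}(\val_{\textnormal{BR}}(B_\tau))$) and the layered-reachability lower bound of \Cref{Result: Existence of optimal policy}. The corollary is simply their direct consequence.
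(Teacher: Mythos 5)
Your proof is correct and follows exactly the paper's argument: the corollary is deduced by combining \Cref{Result: Existence of optimal policy} (the uniform-over-reliable-actions policy is $(n,q)$-stopping) with \Cref{Result: Stopping and reliable is optimal} (stopping plus reliable implies optimal). The additional check of well-definedness via \Cref{Result: Every belief has a reliable action} is a sensible explicit inclusion but does not change the route.
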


We turn to the quantitative analysis for the belief-reachability value.
Note that the existence of optimal stopping policies implies that this value can be approximated using a finite horizon.

\begin{restatable}{lemma}{stoppingapproxreach}
\label{Result: Stopping optimal policy approximate the value}
    Consider a POMDP with belief-reachability objectives and an $(n, q)$-stopping optimal policy $\policy$.
    For all $\eps > 0$, the finite horizon $T \defas n \left\lceil \frac{\log(\eps)}{\log(1-q)} \right\rceil$ is such that
    \[
        \PP^{\policy}_{b_0} ( \exists t \le T \quad B_t \in \D_\T )
            \ge \val_{\textnormal{BR}}(b_0) - \eps \,.
    \]
\end{restatable}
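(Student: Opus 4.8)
The plan is to control the first hitting time of $\D_\T$ under $\policy$ and show that it is at most $T$ with probability at least $1-\eps$; since $\val_{\textnormal{BR}}(b_0) \le 1$, this immediately yields the stated bound $\val_{\textnormal{BR}}(b_0) - \eps$ (indeed the stronger $1-\eps$). Write $\tau \defas \inf\{ t \ge 0 : B_t \in \D_\T \}$ for the hitting time, and set $k \defas \left\lceil \frac{\log(\eps)}{\log(1-q)} \right\rceil$, so that $T = nk$.

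First I would establish the geometric decay $\PP_{b_0}^{\policy}(\tau > jn) \le (1-q)^j$ by induction on $j$, partitioning the horizon into consecutive blocks of length $n$. The base case $j=0$ is trivial. For the inductive step, I condition on the history up to time $jn$ on the event $\{ \tau > jn \}$: the current belief is $B_{jn}$, and the $(n,q)$-stopping property guarantees that, started from $B_{jn}$, a Dirac belief on a terminal state is reached within the next $n$ steps with probability at least $q$. Hence the conditional probability of failing to stop during block $j+1$ is at most $1-q$, giving $\PP_{b_0}^{\policy}(\tau > (j+1)n) \le (1-q)\,\PP_{b_0}^{\policy}(\tau > jn)$.

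The crux of this step, and where I expect the main care is needed, is reapplying the stopping property at the start of each block regardless of the past. This is where the Markov structure of the belief process is essential: $(B_t)_t$ evolves as a controlled Markov chain whose one-step law depends only on the current belief and the chosen action, so conditioning on the history reduces to the fresh dynamics from $B_{jn}$. The only subtlety is that the continuation of $\policy$ after the first $jn$ steps must again satisfy the per-block lower bound $q$; this is immediate for belief-stationary policies such as the reliable-uniform optimal policy of \Cref{Result: Existence of optimal policy}, whose continuation from any history is exactly $\policy$ restarted at the current belief.

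Finally, setting $j = k$ gives $\PP_{b_0}^{\policy}(\tau > T) \le (1-q)^k$. By the choice of $k$, and since $\log(1-q) < 0$, we have $k\log(1-q) \le \log(\eps)$, i.e. $(1-q)^k \le \eps$; hence $\PP_{b_0}^{\policy}(\exists t \le T : B_t \in \D_\T) = \PP_{b_0}^{\policy}(\tau \le T) \ge 1-\eps \ge \val_{\textnormal{BR}}(b_0) - \eps$, as required. For the downstream use it is worth noting that, because $\policy$ is optimal and the belief-reachability outcome is decided exactly at $\tau$ (since $\X \subseteq \T$, the first visit to $\D_\X$ coincides with $\tau$ on the winning event), the same argument applied to the winning set $\D_\X \subseteq \D_\T$ yields $\PP_{b_0}^{\policy}(\exists t \le T : B_t \in \D_\X) \ge \val_{\textnormal{BR}}(b_0) - \eps$, which is the tight form feeding into the point-based finite-horizon approximation.
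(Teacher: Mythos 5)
Your proof is correct and follows essentially the same route as the paper: the block decomposition into windows of length $n$ giving the geometric tail bound $\PP_{b_0}^{\policy}(\tau > jn) \le (1-q)^j$ for the hitting time $\tau$ of $\D_\T$, followed by the choice of $T$ making $(1-q)^k \le \eps$ (and your explicit care about reapplying the stopping bound to the continuation policy at each block is, if anything, more careful than the paper's one-line assertion). The only harmless difference is that your main line concludes the stated $\D_\T$ inequality via the trivial bound $1-\eps \ge \val_{\textnormal{BR}}(b_0) - \eps$ without using optimality, whereas the paper's proof goes through the partition $\val_{\textnormal{BR}}(b_0) = \PP_{b_0}^{\policy}(B_\tau \in \D_\X) \le \PP_{b_0}^{\policy}(\exists t \le T \;\; B_t \in \D_\X) + \PP_{b_0}^{\policy}(\tau > T)$ to obtain the stronger $\D_\X$ form used downstream --- exactly the refinement you sketch in your closing remark.
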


\begin{proof}[Proof Sketch]
Split the play into blocks of $n$ steps.  
Because the optimal policy $\policy$ is $(n,q)$-stopping, each block reaches a Dirac belief on a terminal state with probability at least $q$. Thus for the hitting time $\tau$ of $\D_\T$ can be bounded by
$\PP_{b_0}^{\policy}( \tau > kn) \le (1-q)^{k}$. Therefore, $\tau$ has a geometric tail and is almost surely finite.  
Choose $T = n \left\lceil \dfrac{\log(\eps)}{\log(1-q)} \right\rceil$ so that $\PP^{\policy}_{b_0} ( \tau > T) \le \eps$, showing that a horizon of $T$ steps suffices to approximate the belief-reachability value.
\end{proof}

\paragraph{Reduction to Finite Horizon}
To solve the quantitative analysis for revealing POMDPs with belief-reachability objectives, it suffices to compute the finite horizon value for a horizon that is exponentially large on the input size. 
Indeed, by \Cref{Result: Existence of optimal policy}, there exists an $(n, q)$-stopping optimal policy with $n=|\S|+2$ and $q = (\transition_{\min})^2(\transition_{\min}/|\A|)^{|\S|}$. 
Hence, by \Cref{Result: Stopping optimal policy approximate the value}, for every $\eps > 0$, we can consider $T = n\left\lceil\frac{\log(\eps)}{\log(1-q)}\right\rceil = O\left( |\S| |\A|^{|\S|} \log(1/\eps) / \transition_{\min}^{|\S| + 2} \right)$, which is at most exponentially large in the input size.

\paragraph{Previous Approaches to Finite Horizon}
Finite horizon objectives have been studied by a long time.
A naive approach to the quantitative analysis for belief-reachability objectives would take the time bound $T$ in \Cref{Result: Stopping optimal policy approximate the value} and solve a (fully observable) MDP with $|\A \times \S|^{T}$ states.
This approach implies a 2\textnormal{EXPTIME} complexity upper bound. 
A fundamental result states that, for horizons that are polynomially large with respect to the input, computing the finite horizon value of POMDPs is PSPACE-complete \cite[Theorem 6, page 448]{papadimitriouComplexityMarkovDecision1987}.
The technique, instead of listing explicitly all exponentially many histories, compactly represents them using nondeterminism and space proportional to the horizon.
The conclusion follows from \textnormal{PSPACE} being closed under nondeterminism by Savitch's theorem. 
For exponentially large horizons, this technique leads to an EXPSPACE upper bound.
Instead, we prove an \textnormal{EXPTIME} upper bound.

Point-based algorithms were introduced for POMDPs by~\cite{pineauPointbasedValueIteration2003} as an alternative to approximating the value of POMDPs by considering a fixed subset of beliefs and projected belief updates.
For a survey on point-based algorithms see \cite{shaniSurveyPointbasedPOMDP2013, walravenPointBasedValueIteration2019}, which focuses on finite horizon objectives but does not provide the complexity upper bound we require.

\paragraph{Finite-Horizon Belief-Reachability}
Let $T\in \NN$ be a finite horizon.
The $T$-step belief-reachability value to $\X \subseteq \S$ is defined by $\val_{\textnormal{BR}(\X)}^T(b)\defas\max_{\policy\in \Policies}\PP_b^\policy(\exists t \le T \;\, B_t\in \D_\X).$

\begin{restatable}{lemma}{tstepapprox}
\label{Result: Approximation of the T step reachability value}
    Approximating the $T$-step belief-reachability value of revealing POMDPs up to an additive error of $\eps > 0$ can be computed in exponential time, formally, in time $O\left( T^{|\S|} |\S|^{|\S|+1} |\A| |\Z| / \eps^{(|\S| - 1)} \right)$.
\end{restatable}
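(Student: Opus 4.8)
The plan is to approximate $\val^T_{\textnormal{BR}}$ by a grid-based point-based value iteration on the belief simplex $\Delta(\S)$: iterate the finite-horizon Bellman backup $T$ times, but store the value only at a finite grid of beliefs and evaluate it at off-grid beliefs by rounding to the nearest grid point, controlling the induced error through the Lipschitz continuity of the value. I would start from the dynamic-programming recursion obtained by conditioning on the first action and signal: writing $\next{b}$ for the Bayesian posterior of $b$ after playing $a$ and observing $z$, and $\PP^a_b(z)\defas\sum_{s,\next{s}\in\S}b(s)\transition(s,a)(\next{s},z)$ for the signal probability,
\[
\val^t_{\textnormal{BR}}(b)=
\begin{cases}
1 & \text{if } b\in\D_\X,\\[2pt]
\displaystyle\max_{a\in\A}\ \sum_{z\in\Z}\PP^a_b(z)\,\val^{t-1}_{\textnormal{BR}}(\next{b}) & \text{otherwise,}
\end{cases}
\]
with base case $\val^0_{\textnormal{BR}}(b)=1$ if $b\in\D_\X$ and $0$ otherwise.

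Next I would establish that each $\val^t_{\textnormal{BR}}$ is $1$-Lipschitz in the $\ell_1$ norm. For a fixed policy $\policy$, the acceptance probability decomposes linearly over the initial belief, $\PP^\policy_b(\Phi)=\sum_{s\in\S}b(s)\,\PP^\policy_{\1[s]}(\Phi)$, so it is $1$-Lipschitz in $\|\cdot\|_1$ with coefficients in $[0,1]$; taking the supremum over $\policy\in\Policies$ preserves $1$-Lipschitzness, hence $|\val^t_{\textnormal{BR}}(b)-\val^t_{\textnormal{BR}}(\another b)|\le\|b-\another b\|_1$. This is the only property of the value I will use to bound interpolation error, and it also renders harmless the accidental rounding of a near-Dirac posterior into $\D_\X$, since that changes the value by at most the rounding distance.

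I would then fix $\delta\defas\eps/(T+1)$ and the grid $G$ of all beliefs in $\Delta(\S)$ whose entries are integer multiples of $1/k$ with $k\defas\lceil|\S|/\delta\rceil$, so that every belief lies within $\ell_1$-distance $\delta$ of some grid point and $|G|=O\!\left((|\S|/\delta)^{|\S|-1}\right)=O\!\left((|\S|T/\eps)^{|\S|-1}\right)$; note that all Dirac beliefs, and hence all of $\D_\X$, lie in $G$. Maintaining values $\hat v^t$ only on $G$, I compute each backup at a grid point $g$ via the recursion above but with $\val^{t-1}_{\textnormal{BR}}(\next g)$ replaced by $\hat v^{t-1}(\textnormal{rnd}(\next g))$, where $\textnormal{rnd}$ rounds to the nearest grid point. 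Letting $a_t\defas\max_{g\in G}|\hat v^t(g)-\val^t_{\textnormal{BR}}(g)|$, two facts give the error recursion: the backup is non-expansive in the sup norm (a maximum of averages), and each rounding contributes at most $\delta$ by the Lipschitz bound; therefore $a_t\le a_{t-1}+\delta$ with $a_0=0$, so $a_T\le T\delta$. Reporting $\hat v^T(\textnormal{rnd}(b_0))$ then answers the query within $a_T+\delta=(T+1)\delta=\eps$.

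For the running time, the algorithm performs $T$ iterations; each iteration visits $|G|=O\!\left((|\S|T/\eps)^{|\S|-1}\right)$ grid points and, at each, evaluates $|\A|\cdot|\Z|$ posterior-update-and-round operations of cost $O(|\S|^2)$ (computing $\next g$ dominates). Multiplying yields
\[
O\!\left(T\cdot(|\S|T/\eps)^{|\S|-1}\cdot|\A|\,|\Z|\cdot|\S|^{2}\right)
=O\!\left(T^{|\S|}\,|\S|^{|\S|+1}\,|\A|\,|\Z|\,/\,\eps^{|\S|-1}\right),
\]
as claimed. I expect the main obstacle to be the error analysis across the horizon, namely showing that discretizing at every layer keeps the accumulated error additive ($T\delta$) rather than letting it compound. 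The crux is the non-expansiveness of the Bellman backup combined with the linear, hence $1$-Lipschitz, dependence of acceptance probabilities on the belief; this is exactly what forces the grid resolution to scale like $\eps/T$ instead of $\eps$, and thereby produces the $T^{|\S|}$ factor (rather than merely $T$) in the final bound.
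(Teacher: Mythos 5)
Your overall architecture — a $k$-uniform grid with $k=\Theta(T|\S|/\eps)$, projected Bellman backups for $T$ steps, a $1$-Lipschitz bound on $\val^t_{\textnormal{BR}(\X)}$, and the telescoping error recursion $a_t\le a_{t-1}+\delta$ — is the same as the paper's, and your complexity count matches. However, there is a genuine gap in the Lipschitz step, and it propagates into the rounding rule. The identity $\PP^\policy_b(\Phi)=\sum_{s}b(s)\,\PP^\policy_{\1[s]}(\Phi)$ is valid for objectives that are events over state--action--signal plays, but $\BeliefReach(\X)$ is defined through the belief process $B_t$, which is itself a function of the initial belief $b$; changing $b$ changes the event, so the decomposition fails. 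Concretely, $\val^t_{\textnormal{BR}(\X)}$ is \emph{not} globally $1$-Lipschitz on $\Delta(\S)$: take $\X=\{s\}$, a second state $u$, and dynamics where every action sends $s$ to $u$ and $u$ to itself (with revelation). Then $\val^t(\1[s])=1$ but $\val^t\bigl((1-\eta)\1[s]+\eta\1[u]\bigr)=0$ for every $\eta>0$ and every $t$, since the belief is never a Dirac on $s$. So the value jumps by $1$ across an $\ell_1$-distance of $2\eta$, and your claim that ``accidental rounding of a near-Dirac posterior into $\D_\X$'' costs only the rounding distance is exactly where the argument breaks: an unconstrained nearest-grid-point rounding can snap a non-Dirac posterior onto a Dirac target and overestimate the value by $\Theta(1)$ in a single step.

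The fix, which is how the paper proceeds, is to make both the Lipschitz statement and the projection support-aware: prove $|\val^t(b)-\val^t(b')|\le\|b-b'\|_1$ only for beliefs with $\supp(b)=\supp(b')$ (by induction on the Bellman recursion, using that same-support beliefs agree on membership in $\D_\X$ and that the posterior's support depends only on $\supp(b)$, $a$, $z$), and define the grid projection as $\Pi_k(b)\defas\argmin\{\|b-b'\|_1 : b'\in G_k,\ \supp(b')=\supp(b)\}$, which still achieves $\|b-\Pi_k(b)\|_1\le|\S|/k$. With that change your error recursion and runtime analysis go through unchanged.
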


\begin{proof}[Proof Sketch]
    We approximate the $T$-step belief-reachability value by discretizing the belief simplex and applying projected Bellman updates over this grid.
    Define a $k$-uniform grid $G_k \subseteq \Delta(\S)$ so that every belief $b \in \Delta(\S)$ is at $L_1$-distance at most $\eps/(T+1)$ from some grid point $\Pi_k(b)$, where
    $k = \left \lceil (T + 1)|\S| / \eps \right \rceil$.
    Then, compute the Bellman updates on grid points, carefully projecting back to $G_k$ after each update, for $T$ steps. 
    By induction on the horizon, the error at each step accumulates by at most $\eps/(T+1)$, yielding a total error at most $\eps$. 
    The grid has size $O((T|\S|/\eps)^{|\S|-1})$, and each Bellman update takes $O(|\S|^2|\A||\Z|)$ time, so the total running time is
    $O\left(T^{|\S|}|\S|^{|\S|+1}|\A||\Z|\eps^{-(|\S|-1)}\right)
    $, which is exponential in the input size.
\end{proof}

The following result follows from \Cref{Result: Stopping optimal policy approximate the value,Result: Approximation of the T step reachability value}.
\approxbeliefreach*


\section{Parity Objectives}
\label{Section: Parity}

\subsection{Reduction to Belief-Reachability Objectives}

In the sequel, we show that, for revealing POMDPs, the parity objective value coincides with the belief-reachability to the set of Dirac on the almost‐sure winning parity states.
This proof uses the standard notion of \emph{end components} in MDPs, introduced in \cite{de1997formal}; see also \cite[Section 10.6.3]{baier2008principles}.

\paragraph{Underlying MDPs of POMDPs} 
For a POMDP $P = (\S, \A, \Z, \transition, b_0)$, we define its underlying MDP as $M = (\S \times (\Z \cup \{ \square \}), \A, \another{\transition}, \another{b}_0)$, where the transition function is defined as, 
for all $s, \next{s} \in \S, z \in \Z \cup \{ \square \}, \next{z} \in \Z$, 
\[
    \another{\transition} \bigl((s, z), a \bigr) \bigl((\next{s}, \next{z}) \bigr) \defas \transition(s, a)(\next{s}, \next{z})\,, \qquad 
\]
and the initial belief is defined as $\another{b}_0 \bigl( (s, \square) \bigr) \defas b_0(s)$ for all states $s \in \S$. This construction captures the entire dynamic of $P$. We use this notion to analyze the  policies derived from the original POMDP.

\paragraph{End Components} 
Let $M = (\S, \A, \transition, b_0)$ be an MDP. 
An end component is a pair $\U = (\Q, \E)$ where $\Q \subseteq \S$ is a subset of states and $\E \colon \Q \rightrightarrows \A$ assigns each state to a set of nonempty actions such that
\begin{itemize}
    \item 
        \emph{Closedness}. 
        For all states $q \in \Q$ and actions $a \in \E(q)$, we have $\supp(\transition(q, a)) \subseteq \Q$; and
    \item 
        \emph{Strong connectivity}.
        The directed graph with states $\Q$ and edges $(q, q^\prime)$ where there exists $a \in \E(q)$ such that $\transition(q, a)(q^\prime) > 0$ is strongly connected. 
\end{itemize}
A play $\play$ visits infinitely often an end component $\U = (\Q, \E)$, denoted by $\I(\play) = (\Q, \E)$, if 
\begin{itemize}
    \item The set of states visited infinitely often in $\play$ is $\Q$; and 
    \item For all state $q \in \Q$, the set of actions played infinitely often when in $q$ is $\E(q)$.
\end{itemize}
We say $\I(\play) \subseteq (\Q, \E)$ if $\I(\play) = (\another{\Q}, \another{\E})$, $\another{\Q} \subseteq \Q$ and~$\another{\E} \subseteq \E$.

The following statement is a fundamental result of end components in MDPs.
\begin{lemma}[\protect{\cite[Theorems 3.1 and 3.2]{de1997formal}}]
\label{Result: End components}
    For MDPs, the following assertions hold.
    \begin{itemize}
        \item 
            For every end component $\U = (\Q, \E)$ and state $q \in \Q$, there exists a policy $\policy$ such that 
            \[
                \PP_{\1[q]}^{\policy} \left( \I(\play) = \U \right) = 1 \,.
            \]
        \item 
            For all policies $\policy$, we have 
            \[
                \PP_{b_0}^{\policy} \left( \I(\play) \textnormal{ is an end component} \right) = 1 \,.
            \]
    \end{itemize}
\end{lemma}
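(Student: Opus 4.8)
The plan is to establish the two assertions separately: a randomized memoryless policy realizes a given end component for the first, while a conditional Borel--Cantelli argument combined with a combinatorial fact about infinite walks yields the second. Throughout, I denote by $(\mathcal{F}_t)_{t \ge 0}$ the natural filtration generated by the play.

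For the first assertion, given an end component $\U = (\Q, \E)$, I would consider the memoryless randomized policy $\policy$ that, at every state $q \in \Q$, plays each action in $\E(q)$ uniformly at random. By the closedness condition, a play started at $q \in \Q$ never leaves $\Q$, so $\policy$ induces a finite Markov chain on $\Q$. The strong-connectivity condition guarantees that this chain is irreducible, hence, being finite, positive recurrent, so every state of $\Q$ is visited infinitely often almost surely. Since at each visit to a state $q'$ every action $a \in \E(q')$ is selected with fixed positive probability $1/|\E(q')|$, a Borel--Cantelli argument shows that each such action is played infinitely often almost surely. Conversely, no state outside $\Q$ is ever reached and no action outside $\E(q')$ is ever played at $q'$, so $\I(\play) = \U$ with probability one.

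For the second assertion, fix an arbitrary policy $\policy$ and write $\I(\play) = (\Q, \E)$ for the random set of states visited infinitely often together with, for each $q \in \Q$, the actions played infinitely often at $q$; each $\E(q)$ is nonempty since only finitely many actions are available. The crux is to verify closedness: for $q \in \Q$ and $a \in \E(q)$, every successor $q'$ with $\transition(q, a)(q') > 0$ must also lie in $\Q$. I would prove this via Lévy's conditional Borel--Cantelli lemma applied to the events $B_t \defas \{S_t = q, A_t = a, S_{t+1} = q'\}$, whose conditional probability given $\mathcal{F}_t$ equals $\1[S_t = q, A_t = a] \cdot \transition(q, a)(q')$. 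On the event $\{q \in \Q,\ a \in \E(q)\}$ the pair $(q, a)$ occurs at infinitely many times, so $\sum_t \PP(B_t \mid \mathcal{F}_t)$ diverges; the lemma then forces $B_t$ to occur infinitely often, i.e., $q'$ is visited infinitely often, so $q' \in \Q$ almost surely. A union bound over the finitely many triples $(q, a, q')$ yields closedness with probability one.

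It remains to establish strong connectivity, which I expect to be the more delicate but purely pathwise step. After a finite random time $\tau$, the play visits only states of $\Q$, plays only actions of $\E$, and traverses only those edges traversed infinitely often; call this edge set $F$, and note each $(q, q') \in F$ has a witnessing action $a \in \E(q)$ with $\transition(q, a)(q') > 0$. For any $u, v \in \Q$, pick a visit to $u$ after time $\tau$; since $v$ is visited infinitely often, the play reaches $v$ later, and the intervening finite segment is a $u$-to-$v$ path using only edges of $F$. Hence $(\Q, F)$ is strongly connected, and therefore so is the larger graph underlying $(\Q, \E)$. Combining almost-sure closedness with pathwise strong connectivity shows that $\I(\play)$ is an end component with probability one. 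The main obstacle is managing the interface between the probabilistic closedness argument and the combinatorial strong-connectivity argument cleanly, in particular ensuring the conditional Borel--Cantelli lemma is invoked on the correct event $\{q \in \Q,\ a \in \E(q)\}$ and that the random time $\tau$ after which only infinitely-often edges are traversed is well-defined on almost every play.
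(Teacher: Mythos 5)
The paper does not prove this lemma at all: it is imported verbatim from the cited reference (de Alfaro's Theorems 3.1 and 3.2), so there is no in-paper argument to compare against. Your proof is a correct, self-contained reconstruction of the standard argument from that reference: the uniform-over-$\E(q)$ memoryless policy inducing an irreducible finite chain on $\Q$ for the first item, and the conditional (L\'evy) Borel--Cantelli lemma for closedness together with the pathwise segment argument for strong connectivity for the second. The only point worth tightening is bookkeeping: for $\PP(B_t \mid \mathcal{F}_t)$ to equal $\1[S_t = q, A_t = a]\cdot\transition(q,a)(q')$ you need $\mathcal{F}_t$ to include the action $A_t$, i.e., take $\mathcal{F}_t = \sigma(S_0, A_0, \ldots, S_t, A_t)$; this is a convention fix, not a gap.
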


The following result relates end components in MDPs to policies in general POMDPs. 

\begin{restatable}{lemma}{almostsuresafetygeneral}
\label{Result: Existence of almost-sure saftey policy for an end component in general PODMPs}
    Consider a POMDP $\POMDP$ and its underlying MDP $\MDP$.
    For every reachable end component $\U = (\Q, \E)$ of $\MDP$, i.e., there exists a policy $\policy$ on $\POMDP$ such that $\PP_{b_0}^{\policy} ( \I(\play) = \U ) > 0$, we have that, for all states $q \in \Q$, actions $a \in \E(q)$, and signals $z \in \Z$, there exists a policy~$\policy_{q,a,z}$ on $\POMDP$ such that
    \[
        \PP_{b}^{\policy_{q,a,z}} ( \I(\play) \subseteq \U) \;=\; 1 \,,
    \]
    where $b$ is the belief after starting with belief $\1[q]$, playing action $a$, and receiving signal $z$.
\end{restatable}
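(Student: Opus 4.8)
The plan is to reduce the probability-one containment $\I(\play)\subseteq\U$ to the existence of an observation-based policy that keeps the support of the belief inside $\Q$ forever while only ever playing actions prescribed by $\E$. First I would pin down the initial support. Writing $q=(s,z_0)$, the hypothesis $a\in\E(q)$ together with closedness of the end component gives $\supp(\another{\transition}(q,a))\subseteq\Q$; since $b$ is the Bayesian posterior obtained after playing $a$ and observing the signal $z$, it follows that $\supp(b)\subseteq\Q_z$, where $\Q_z\defas\{t:(t,z)\in\Q\}$. This is the base case of the invariant I want to maintain: the current belief support is always a nonempty subset $R$ of some slice $\Q_y$.

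Next I would work in the subset (belief-support) automaton of $\POMDP$: from a macro-state $(R,y)$, playing action $c$ and observing signal $y'$ updates the support deterministically to $R'\defas\{t:\exists u\in R,\ \transition(u,c)(t,y')>0\}$. The key observation is that if, at every macro-state reached, one can play a single action $c\in\bigcap_{u\in R}\E((u,y))$, then closedness forces every successor into $\Q$, so the true state never leaves $\Q$ and only $\E$-actions are ever used; hence $\I(\play)\subseteq\U$ holds surely, with no separate boosting-to-one argument needed. The entire statement therefore collapses to one combinatorial claim: along every support reachable from $(\supp(b),z)$ under such common-action play, the intersection $\bigcap_{u\in R}\E((u,y))$ is nonempty.

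This common-action claim is the hard part, and it is exactly where the reachability hypothesis must enter and where the gap between full-information MDP strategies and observation-based POMDP policies bites: a priori $\E$ may prescribe conflicting actions to distinct states sharing the same support, so no single action keeps all of them in $\U$. To rule this out I would use the witness $\policy$ with $\PP_{b_0}^{\policy}(\I(\play)=\U)>0$. Conditioning on this positive-probability event and applying L\'evy's zero-one law to the filtration of observation histories, I would extract histories along which the support lies in $\Q$ and the continuation keeps $\I(\play)=\U$ with conditional probability bounded away from zero. I would then show that the supports arising along these $\U$-consistent histories are exactly the ones my construction reaches: if some reachable support $R$ had $\bigcap_{u\in R}\E((u,y))=\emptyset$, then the indistinguishable states of $R$ would force $\policy$ to play, infinitely often, an action violating the $\E$-constraint at the true state, contradicting $\PP_{b_0}^{\policy}(\I(\play)=\U)>0$.

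Finally I would invoke \Cref{Result: End components}, together with strong connectivity of $\U$, to certify that the supports required by the construction are genuinely reached from $(\supp(b),z)$, closing the induction and producing the desired observation-based policy $\policy_{q,a,z}$. I expect the main obstacle to be the third step --- turning positive-probability reachability under an arbitrary POMDP policy into a uniform guarantee of common $\E$-actions at every reachable belief support --- since it is precisely here that partial observability could, in principle, obstruct staying inside the end component.
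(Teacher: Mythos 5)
Your overall skeleton --- reduce to a safety game on belief supports, maintain the invariant that the support stays inside a slice of $\Q$, and play an action that is simultaneously allowed by $\E$ at every state of the support --- is sound, and your first two steps are correct (indeed, insisting on actions in $\bigcap_{u\in R}\E((u,y))$ rather than merely on actions that keep the support inside $\Q$ is, if anything, more faithful to the conclusion $\I(\play)\subseteq\U$ than the paper's own argument, which only tracks containment of the states in $\Q$). The proof stands or falls on your third step, and as written that step has a genuine gap. First, the belief supports visited by the witness policy $\policy$ from $b_0$ are not ``exactly the ones your construction reaches'' from $b$: at a time when the true state is $q$, action $a$ is played and signal $z$ is received, the witness's posterior has support \emph{containing} $\supp(b)$ but generally strictly larger (it is computed from whatever prior $\policy$ holds at that time, not from $\1[q]$), and from the next step on the two processes diverge because they play different actions; so you cannot read off common-action availability at your supports from the behaviour of $\policy$ at its supports. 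Second, one reachable support $R$ with $\bigcap_{u\in R}\E((u,y))=\emptyset$ does not ``force $\policy$ to play, infinitely often, an action violating the $\E$-constraint at the true state'': the event $\{\I(\play)=\U\}$ tolerates finitely many violations, and $\policy$ might visit the offending support rarely, or with beliefs so skewed that the per-visit probability of a violation is summable, so no contradiction follows from the mere existence of such a support.

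What is missing is the quantitative argument the paper uses instead. The paper argues by contradiction: if no policy from $b$ keeps the play inside $\Q$ almost surely, then by an attractor-style argument on the at most $2^{|\Q|}$ belief supports, \emph{every} policy from $b$ produces a violation within $2^{|\Q|}$ steps with probability at least $\transition_{\min}^{2^{|\Q|}}$, a bound uniform over policies; since on $\{\I(\play)=\U\}$ the pair $(q,a)$ followed by signal $z$ recurs infinitely often and the conditional law of the future at each such time coincides with that of some policy started from $b$, a conditional Borel--Cantelli argument yields infinitely many violations almost surely on that event, i.e.\ $\PP_{b_0}^{\policy}(\I(\play)=\U)=0$, a contradiction. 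Your common-action claim can be repaired along the same lines by counting as a ``violation'' either leaving $\Q$ or realizing a state--action pair outside $\E$: if some support reachable in your restricted automaton had no common $\E$-action, every policy from $b$ would incur a violation within a bounded horizon with probability bounded below, and infinitely many violations force either some state outside $\Q$ or some fixed pair outside $\E$ to occur infinitely often, contradicting $\I(\play)=\U$. Without this uniform lower bound and the recurrence of the conditional law $b$, the L\'evy zero--one law alone does not close the argument.
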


\begin{proof}[Proof Sketch]
    Consider a reachable end component $\U=(\Q,\E)$ of $\MDP$ with corresponding policy $\policy$, i.e.,
    $\PP^{\policy}_{b_0}(\I(\play)=\U) > 0$. 
    On the event $\{ \I(\play)= \U \}$, every state-action pair $(q,a)$ with $q \in \Q$ and $a \in \E(q)$ is taken infinitely often, and each such occurrence results in at least one signal $z$ that occurs with positive probability.
    Fix $q \in \Q$, $a \in \E(q)$ and a signal $z$ that can follow $(q,a)$. 
    By contradiction, assume that from the belief $b$ obtained after starting with $\1[q]$, playing action $a$, and receiving signal $z$, no policy can stay inside $\U$ almost surely. 
    Then, starting from $b$, any policy has a positive, state–independent strictly positive lower bound on the probability of leaving $\Q$ within a bounded number of steps.
    Because $(q, a)$ is visited infinitely often on the event $\{ \I(\play) = \U \}$, these lower‑bounded leaving probabilities accumulate, driving the probability of ever leaving $\Q$ to~$1$. 
    This contradicts $\PP^{\policy}_{b_0}(\I(\play) = \U) > 0$.
    Hence, such a leaving bound cannot exist: there must be a policy $\policy_{q,a,z}$ that, from $b$, keeps the play inside $\Q$ (and therefore inside $\U$) with probability~$1$.  
\end{proof}

The following example demonstrates that the above result is tight in the sense that it can not guarantee a policy $\policy_{q,a,z}$ such that 
$\PP_{b}^{\policy_{q,a,z}} \bigl ( \I(\play) = \U \bigr) = 1$.

\begin{example}
\label{ex:no-almost-sure-policy}
    Consider the example presented in~\cite[Example 4.3]{chatterjeeFiniteMemoryStrategiesPOMDPs2021} of a POMDP with only one possible signal in Figure~\ref{Figure: Parity example}, commonly known as a blind MDP.
    It has four states $\S = \{s_0, s_1, \bot, \top\}$ and two actions:  
    \emph{wait} ($w$) and \emph{commit} ($c$), i.e., $\A = \{w, c\}$.  
    The priority function is defined as
    \[
        \priority(s_0) \defas 1,
        \;
        \priority(s_1) \defas 1,
        \;
        \priority(\bot) \defas 1,
        \;
        \priority(\top)=0\,.
    \]
    The transitions are defined as follows.
    (a)~Under action $w$, state $s_0$ moves either to $s_1$ or loops, both with  probability $1/2$; states $s_1$ and $\bot$ loop; and state $\top$ moves to $s_0$. 
    (b) Under action $c$, state $s_0$ moves to $\bot$; state $s_1$ moves $\top$; state $\bot$ loops; and state $\top$ moves to $s_0$. 
    The only end component of the underlying MDP that satisfies the parity condition is
    \[
        \U = \bigl(\{s_1,\top\},\;\{(s_0,w),(s_1,w),(s_1,c),(\top,w),(\top,c)\} \bigr) \,.
    \]
    
    Note that, for every $\eps$‑optimal policy $\policy$, we have that $\PP_{\1[s_0]}^{\policy} ( \I(\play) = \U ) \ge 1 - \eps$.
    This is achieved only by policies that wait long enough before committing, which requires unbounded memory. 
    Conversely, no policy can guarantee $\PP_{\1[s_0]}^{\policy} (\I(\play)=\mathcal U) = 1$.
    Indeed, whenever action $c$ is taken, the belief on state $s_0$ is positive and therefore the state absorbs at $\bot$ with positive probability.
    The best that can be achieved almost surely is to stay forever in the
    end component
    \(
      \another{\U}
      =\bigl(\{s_1\},\{(s_1,w)\}\bigr),
    \)
    whose minimal priority is~1 and therefore violates the parity condition.
    \qed
\end{example}

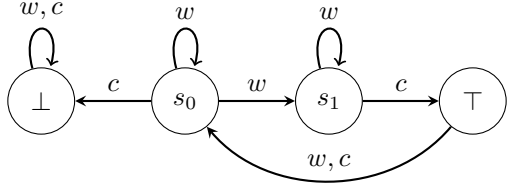
\begin{figure}[t]
    \centering
    \begin{tikzpicture} [
        node distance = 1cm, 
        every initial by arrow/.style = {thick}
        ]
        
        \node (s0) [state] {$s_0$};
        \node (s1) [state, right = of s0] {$s_1$};
        \node (bot) [state, left = of s0] {$\bot$};
        \node (top) [state, right = of s1] {$\top$};
        
        \path [-stealth, thick]
        (s0) edge [loop above] node {$w$} ()
        (s0) edge node [above] {$w$} (s1)
        (s1) edge [loop above]  node {$w$} ()
        (s0) edge node [above] {$c$} (bot)
        (s1) edge node [above] {$c$} (top)
        (bot) edge [loop above] node {$w, c$} ()
        (top) edge [bend left=50] node [above] {$w, c$} (s0);
    \end{tikzpicture}
    \caption{
        Example of a classic general POMDP that requires infinite memory policies for parity objectives.
        Edges represent a positive probability transition between states when the corresponding action in its label is used.
    }
    \label{Figure: Parity example}
\end{figure}

We now show that in revealing POMDPs, the above result can be extended to guarantee a policy $\policy_q$ such that $\PP_{\1[q]}^{\policy_q} \bigl ( \I(\play) = \U \bigr) = 1\,.$

\begin{restatable}{lemma}{almostsuresafetyrevealing}
\label{Result: Existence of almost-sure saftey policy for an end component}
    Consider a revealing POMDP $\POMDP$ and its underlying MDP $\MDP$. 
    For every policy $\policy$ on $\POMDP$ and end component $\U = (\Q, \E)$ of $M$ such that $\PP_{b_0}^{\policy} ( \I(\play) = \U ) > 0$, we have, for all states $q \in \Q$, there exists a policy~$\policy_q$ on $\POMDP$ such that
    \[
        \PP_{\1[q]}^{\policy_q} ( \I(\play) = \U) \;=\; 1 \,.
    \]
\end{restatable}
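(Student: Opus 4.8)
The plan is to combine the general safety policies supplied by \Cref{Result: Existence of almost-sure saftey policy for an end component in general PODMPs} with the revealing structure. The key new ingredient, absent in \Cref{ex:no-almost-sure-policy}, is that in a revealing POMDP the state is announced at every step with probability at least $\transition_{\min}$, so almost surely the belief collapses to a Dirac mass on a state of $\Q$ infinitely often. These revelations act as \emph{checkpoints} at which the exact state is known, and from a known state inside an end component one can steer along the edges of $\E$. I would therefore build a round-based policy that cycles through all pairs $(r,a)$ with $a \in \E(r)$, using the checkpoints to cover every pair infinitely often while never leaving $\Q$.

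Concretely, fix an enumeration of the pairs of $\E$; in the round targeting $(r,a)$, start from the current checkpoint state $q' \in \Q$. By strong connectivity of $\U$ there is a path from $q'$ to $r$ along $\E$-edges, and I play its actions one at a time, but only ever from a revealed state, insisting that each successive revelation match the intended next state. Since every positive-probability transition reveals its target with probability at least $\transition_{\min}$ and the path has length at most $|\Q|$, this navigation reaches a revealed $r$, plays $a$, and yields a fresh checkpoint with probability at least $p \defas \transition_{\min}^{|\Q|+1}$, uniformly over rounds and starting states. If a step announces a \emph{different} state, that is still a checkpoint and I restart the navigation from there; only when a step produces no revelation, leaving a non-Dirac belief, do I switch to the safety policy $\policy_{q',a',z'}$ of \Cref{Result: Existence of almost-sure saftey policy for an end component in general PODMPs} (with $(q',a',z')$ the last revealed state, action, and signal), running it until the next revelation and then opening the next round.

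Writing $\I(\play) = (\another{\Q},\another{\E})$, the state condition $\another{\Q} = \Q$ then follows from two independent facts. First, the play never leaves $\Q$: navigation plays $\E$-actions from known states, which are closed by definition of an end component, and the fallback policy keeps the play in $\Q$ almost surely, so $\another{\Q} \subseteq \Q$. Second, by revealing there are infinitely many checkpoints almost surely, each pair of $\E$ is the target of infinitely many rounds, and conditioned on the checkpoint opening such a round it is covered with probability at least $p$; by the strong Markov property at checkpoints these events are conditionally independent, so the second Borel--Cantelli lemma forces every pair—hence every state of $\Q$ and every action of $\E$—to be realized infinitely often. This yields $\another{\Q} = \Q$ together with the inclusion $\another{\E} \supseteq \E$.

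The step I expect to be the main obstacle is the reverse inclusion $\another{\E} \subseteq \E$: no action outside $\E$ may be played infinitely often at any state. Stray actions can only arise inside the blind fallback segments, where an action that is safe for the whole belief support need not lie in $\E(s)$ for the realized state $s$; navigation contributes none, as it acts only from revealed states. Each individual safety run satisfies $\I(\play) \subseteq \U$ and hence uses non-$\E$ actions only finitely often, but the difficulty is that coverage requires infinitely many short fallback segments, and per-run finiteness does not immediately sum to a global bound. Here I would exploit the hypothesis $\PP_{b_0}^{\policy}(\I(\play) = \U) > 0$, which is strictly stronger than reachability of $\U$: it certifies that $\U$ is realizable \emph{exactly}, so that covering $\E$ inside $\Q$ cannot force any additional action infinitely often. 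Transferring this through the construction of \Cref{Result: Existence of almost-sure saftey policy for an end component in general PODMPs} should bound the stray actions accumulated across the fallback segments at every state; once this accumulation bound is in place, combining $\another{\Q} = \Q$, $\another{\E} \supseteq \E$, and $\another{\E} \subseteq \E$ gives $\I(\play) = \U$ almost surely.
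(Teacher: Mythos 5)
Your construction is essentially the paper's: the paper's proof also cycles through phases indexed by pairs $(\another{q},\another{a})$ with $\another{a}\in\E(\another{q})$, runs the safety policy of \Cref{Result: Existence of almost-sure saftey policy for an end component in general PODMPs} whenever the belief is non-Dirac, waits for the revelation guaranteed by the revealing property, navigates along a fixed $\E$-path from the revealed state to the target, and retries on failure. Your quantitative bookkeeping (the per-round success probability $\transition_{\min}^{|\Q|+1}$ and the Borel--Cantelli argument at checkpoints) is more explicit than the paper's, which simply asserts that a Dirac belief is reached almost surely and that each pair is eventually executed; both arguments deliver the same conclusions, namely that the play never leaves $\Q$, that every state of $\Q$ is visited infinitely often, and that every pair of $\E$ is played infinitely often, i.e., $\another{\Q}=\Q$ and $\another{\E}\supseteq\E$.

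The step you flag as the main obstacle, the reverse inclusion $\another{\E}\subseteq\E$, is indeed the one point where your writeup is incomplete: your appeal to the hypothesis $\PP_{b_0}^{\policy}(\I(\play)=\U)>0$ is only a plan, since the safety policies supplied by \Cref{Result: Existence of almost-sure saftey policy for an end component in general PODMPs} are restarted infinitely often and each restart may replay its initial, non-$\E$ actions, so the per-run guarantee $\I(\play)\subseteq\U$ does not by itself prevent a fixed pair $(s,a)$ with $a\notin\E(s)$ from accumulating infinitely often across fallback segments. You should know, however, that the paper's own proof stops at exactly the same place: it concludes from ``all states and actions inside $\U$ are visited infinitely often,'' which only establishes the inclusion $\another{\E}\supseteq\E$ and leaves $\another{\E}\subseteq\E$ unaddressed. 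So you have not missed an idea present in the paper; you have correctly isolated a subtlety the paper elides. It is worth noting that the gap is immaterial for how the lemma is used in \Cref{Result: Reduction of parity to belief-reachability}: the parity objective depends only on the set of states visited infinitely often, so the conclusion $\another{\Q}=\Q$ (which both you and the paper do establish) already suffices for the downstream argument.
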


\begin{proof}[Proof Sketch]
    Consider a policy $\policy$ that $\PP^{\policy}_{b_0} (\I(\play) = \U) > 0$. 
    Fix a state $q \in \Q$. 
    We construct a policy $\policy_q$ that proceeds in \emph{phases}, one for each ordered pair $(\another{q},\another{a})$ where $\another{q} \in \Q$ and $\another{a} \in \E(\another{q})$. 
    The goal of a phase is to reach $\another{q}$ and to play action $\another{a}$. 
    When the current belief is non–Dirac, the policy follows the almost‑sure safety policy from \Cref{Result: Existence of almost-sure saftey policy for an end component in general PODMPs} that stays inside $\U$ until a Dirac belief $\1[\next{q}]$ is reached, which is guaranteed in finite time because $P$ is revealing. 
    From $\next{q}$, the policy moves inside $\U$ along a fixed finite path of ``safe'' actions to the target state $\another{q}$; if the belief becomes non‑Dirac, the policy returns to the safety policy and retries. 
    Once the belief is $\1[\another{q}]$, the policy plays the action $\another{a}$; upon observing the resulting signal $z$, switches to the safety policy and starts the next phase. 
    Because every ordered pair $(q,a) \in \E$ is visited infinitely often, each edge of $\U$ is taken infinitely often. The safety policies ensure the play never leaves $\Q$. 
    Consequently, the event $\{\I(\play)=\U \}$ is satisfied with probability~1 under $\policy_q$ when starting from the belief $\1[q]$. 
\end{proof}

We are ready to present the reduction of parity to belief-reachability of the almost-sure winning states for parity.

\begin{restatable}{lemma}{paritytobeliefreach}
\label{Result: Reduction of parity to belief-reachability}
    Consider a revealing POMDP with parity objectives.
    Denote the set of states for which, if the initial belief were a Dirac on that state, then the parity condition can be satisfied almost-surely by $\X \defas \left\{ s \in \S : \exists \policy \in \Policies \quad \PP_{\1[s]}^{\policy}( \Parity ) = 1 \right\}$. 
    Then, the parity value coincides with the belief-reachability value to $\X$, i.e., for all beliefs $b \in \Delta(\S)$,
    \[
        \val_{\textnormal{P}}(b) = \val_{\textnormal{BR}(\X)}(b) \,.
    \]
\end{restatable}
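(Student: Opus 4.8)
The plan is to prove the two inequalities $\val_{\textnormal{P}}(b) \ge \val_{\textnormal{BR}(\X)}(b)$ and $\val_{\textnormal{P}}(b) \le \val_{\textnormal{BR}(\X)}(b)$ separately, where $\X$ is the set of states that are almost-surely winning for parity from a Dirac belief.

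\textbf{The easy direction} ($\val_{\textnormal{P}} \ge \val_{\textnormal{BR}(\X)}$). First I would show that reaching a Dirac belief on a state $s \in \X$ is at least as good as satisfying parity. By definition of $\X$, from each $s \in \X$ there is a policy $\policy_s$ that satisfies parity almost surely starting from $\1[s]$. Given any $\eps > 0$, I would fix an $\eps$-optimal policy for the belief-reachability objective to $\X$; on the event that it reaches some Dirac belief $\1[s]$ with $s \in \X$ (a stopping time, by the revealing structure), I splice in $\policy_s$ and thereafter satisfy parity almost surely. This yields a policy achieving parity with probability at least $\val_{\textnormal{BR}(\X)}(b) - \eps$, and letting $\eps \to 0$ gives the inequality.

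\textbf{The hard direction} ($\val_{\textnormal{P}} \le \val_{\textnormal{BR}(\X)}$). This is where the end-component machinery and \Cref{Result: Existence of almost-sure saftey policy for an end component} are essential. Fix an arbitrary policy $\policy$ on the POMDP and lift it to the underlying MDP $\MDP$. By \Cref{Result: End components}, under any policy the set of infinitely-visited state-action pairs forms an end component almost surely, so the probability of satisfying parity is at most the probability that the play enters an end component $\U = (\Q, \E)$ of $\MDP$ whose minimal priority in $\Q$ is even (a \emph{winning} end component). The key claim I would establish is that \emph{every state of a reachable winning end component lies in $\X$}: if $\U$ is winning and reachable with positive probability under some policy, then by \Cref{Result: Existence of almost-sure saftey policy for an end component} there exists, for each $q \in \Q$, a policy $\policy_q$ with $\PP_{\1[q]}^{\policy_q}(\I(\play) = \U) = 1$; since $\U$ is winning, this policy satisfies parity almost surely from $\1[q]$, so $q \in \X$. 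The main obstacle is making precise the transition from ``parity is satisfied'' to ``the play physically reaches a Dirac belief on an $\X$-state,'' since parity is satisfied only asymptotically whereas belief-reachability demands an actual revelation at a finite time. Here I would exploit the revealing property: on any infinite play that enters a winning end component $\U$, each state of $\Q$ (in particular some $\X$-state) is visited infinitely often, and at each visit the current state is announced with probability at least $\transition_{\min}$; by the second Borel--Cantelli lemma this revelation occurs almost surely, so the belief collapses to $\1[q]$ for some $q \in \Q \subseteq \X$ at some finite time. Hence the parity event is contained, up to a null set, in $\BeliefReach(\X)$, giving $\PP_b^{\policy}(\Parity) \le \PP_b^{\policy}(\BeliefReach(\X)) \le \val_{\textnormal{BR}(\X)}(b)$; taking the supremum over $\policy$ completes the proof.

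I expect the delicate point to be the almost-sure revelation argument, which must be applied conditionally on the event that the play remains in $\Q$ forever while handling the interplay between the (history-dependent, belief-driven) revelation probabilities and the end-component structure; the revealing hypothesis providing a uniform lower bound $\transition_{\min}$ on each announcement is exactly what rescues the Borel--Cantelli step.
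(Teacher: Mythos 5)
Your proposal is correct and follows essentially the same route as the paper: decompose plays by end components of the underlying MDP, use \Cref{Result: End components} and \Cref{Result: Existence of almost-sure saftey policy for an end component} to show every state of a reachable winning end component lies in $\X$, and use the revealing property to collapse the belief, with the converse inequality obtained by splicing an almost-sure winning policy after reaching $\D_\X$. If anything, your explicit two-inequality split and the conditional Borel--Cantelli argument for the belief collapse make precise a step the paper's proof states only as an equality justified by citation.
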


\begin{proof}[Proof sketch]
    Consider a policy $\policy\in \Policies$.
    Consider its end components in the underlying MDP on $\S \times \Z$.
    Note that each end component either does satisfy or does not satisfy the parity condition.
    Moreover, if they satisfy the parity condition, then $\policy$ is an almost-sure winning policy starting from any state inside the end component.
    Therefore, the probability of satisfying the parity condition under $\policy$ corresponds to the belief-reachability to the end components where the parity condition is satisfied. 
    In particular, in these end components, parity condition is satisfied almost-surely.
    We conclude because the policy is arbitrary.
\end{proof}

\subsection{Proofs of Main Results}
Building on \Cref{Result: Reduction of parity to belief-reachability}, which reduces parity to belief‑reachability, and the \textnormal{EXPTIME} procedure in \Cref{Result: Approximating belief-reachability is EXPTIME} for approximating the belief-reachability value, we now (a)~obtain an \textnormal{EXPTIME} algorithm for approximating parity values (\Cref{Result: approximating of parity value is exptime}); and (b)~show that limit‑sure and almost‑sure winning coincide (\Cref{Result: Revealing POMDPs with parity objectives have 0-optimal policy}).

\approxparity*

\begin{proof}[Proof Sketch]
    By \cite[Theorem 3]{bellyRevelationsDecidableClass2025}, computing almost-sure winning parity states in the revealing POMDP is \textnormal{EXPTIME}. Therefore, this result is a direct implication of \Cref{Result: Reduction of parity to belief-reachability} and \Cref{Result: Approximating belief-reachability is EXPTIME}. 
\end{proof}

\parityoptimal*

\begin{proof}[Proof Sketch]
    It is a direct implication of \Cref{Result: Optimal policy for reachability} and \Cref{Result: Reduction of parity to belief-reachability}. 
\end{proof}

\paragraph{Concluding Remarks}
In this work, we consider revealing POMDPs which have been studied in the literature and provide decidability results for the fundamental computational problems for this model. 
Interesting directions for future work include exploring the practical applicability of our algorithms and extending our decidability results to other classes of POMDPs.

%
%
\section*{Acknowledgements}
This work was supported by the ANRT under the French CIFRE Ph.D program in collaboration between NyxAir and Paris-Dauphine University (Contract: CIFRE N° 2022/0513), 
by the French Agence Nationale de la Recherche (ANR) under reference ANR-21-CE40-0020 (CONVERGENCE project),
by Austrian Science Fund (FWF) 10.55776/COE12, 
and by the ERC CoG 863818 (ForM-SMArt) grant.

\bibliographystyle{plain}
\bibliography{refs}

\onecolumn

\appendix


\section{Proofs of \Cref{Section: Reachability}}

\existencereliable*

\begin{proof}
    Consider a POMDP with initial belief $b_0 \in \Delta(\S)$ and set of target states $\X \subseteq \S$.
    By definition of the value, we have that $\val_{\textnormal{BR}(\X)}(b_0) = \max_{a \in \A} \EE^{a}_{b_0} (\val_{\textnormal{BR}(\X)}(B_1))$. Because the set of actions $\A$ is finite, there exists $a \in \A$ that achieves the maximum, which is a reliable action.
\end{proof}

\stoppingandreliable*
\begin{proof}
    Consider a POMDP with initial belief $b_0 \in \Delta(\S)$ and belief-reachability objectives. 
    Denote the set of target states and the corresponding set of terminal states by $\X\subseteq \S$ and $\T \subseteq \S$, respectively.
    Consider a policy $\policy\in \Policies$ that is stopping and uses only reliable actions.
    We show that $\policy$ is optimal.
    
    Denote the hitting time of terminal Dirac beliefs by $\tau \defas \inf \{ t \ge 0 : B_t \in \D_\T \}$. 
    Because the policy $\policy$ is $(n, q)$-stopping, for every belief $b \in \Delta(\S)$, we have that $\PP_b^\policy(\tau \le n) \ge q$. 
    Therefore, for every $k \ge 1$,
    \[
        \PP_b^\policy(\tau \le k n) \ge 1 - (1 - q)^k \,.
    \]
    Taking the limit as $k$ goes to infinity, we deduce that $\PP_b^\policy(\tau < \infty) = 1$.
    
    Because $\policy$ uses only reliable actions, we have that, for every $t\ge 1$,
    \[
        \val_{\textnormal{BR}(\X)}(b) 
            = \EE^{\policy}_{b} \left( \val_{\textnormal{BR}(\X)}(B_t) \right) \,.
    \]
    Therefore, for every $T \ge 1$ and $b \in \Delta(\S)$,
    \[
        \val_{\textnormal{BR}(\X)}(b) 
            = \EE^{\policy}_{b}\left( \val_{\textnormal{BR}(\X)}(B_\tau) \1_{\{\tau \le T\}} \right) + \EE^{\policy}_{b}\left( \val_{\textnormal{BR}(\X)}(B_\tau) \1_{\{\tau > T\}} \right) \,.
    \]
    Whenever $\tau \le T$, we have that there exists $t \le T$ such that $B_t\in \D_{\T}$, and thus $\val_{\textnormal{BR}(\X)}(B_t) = \1_{\{ B_t\in \D_\X \}}$.
    Taking $T$ to infinity,
    \begin{align*}
        \val_{\textnormal{BR}(\X)}(b) 
            &= \lim_{T \to \infty} \EE^{\policy}_{b}\left( \val_{\textnormal{BR}(\X)}(B_\tau) \1_{\{\tau \le T\}} \right)\\
            &= \lim_{T \to \infty} \PP^{\policy}_{b}( \exists t \le T \quad B_t \in \D_{\X})\\
            &= \PP^{\policy}_{b}(\exists t \ge 0 \quad B_t \in \D_{\X}) \,.
    \end{align*}
    Therefore, the policy $\policy$ is optimal.
\end{proof}

\optimalpolicyreach*
\begin{proof}
    Consider a revealing POMDP $\POMDP$ with belief-reachability objectives. 
    Recall that, by \Cref{Result: Every belief has a reliable action}, every belief has at least one reliable action, i.e., for every $b \in \Delta(\S)$, we have that $\R(b) \neq \emptyset$.
    Let $\policy$ be the policy that plays uniformly at random among all reliable actions.
    We show that $\policy$ is $(n, q)$-stopping, i.e., for every initial belief $b \in \Delta(\S)$,
    \[
        \PP^{\policy}_{b} (\exists t \le n \quad B_t \in \D_\T) 
            \ge q \,,
    \]
    where 
    \[
        n = |\S| + 2 \quad \textnormal{ and } \quad q = \transition_{\min}^2\left( \frac{\transition_{\min}}{|\A|} \right)^{|\S|} \,.
    \]

    Denote the set of terminal states by $\T \subseteq \S$.
    Define the set of states that can reach $\T$ using only reliable actions, indexed by the number of steps required, as follows.
    First, $\S_{0} \defas \T$ and, for $t \ge 1$,
    \[
        \S_{t} \defas \left\{ s \in \S : \exists a \in \R\left(\1[{s}]\right) \quad \sum_{\another{s} \in \S_{t - 1}, z \in \Z} \transition(s, a) \left(\another{s}, z \right) > 0 \right\} \,.
    \]
    We claim that at most $|\S|$ steps are required to cover all states, i.e., $\S_{|\S|} = \S$.
    
    Indeed, by contradiction, assume that the set of uncovered states $\L \defas \S \setminus \S_{|\S|}$ is nonempty.
    Note that $(\S_{t})_{t \ge 0}$ is an increasing and bounded sequence and therefore reaches a fixpoint equal to $\S_{|\S|}$. 
    On the one hand, for all state $s \in \L$, under reliable actions for the belief $\1[s]$, the state can not move from $\L$ to $\S_{|\S|}$.
    Formally, for all $s \in \L, a \in \R(\1[s])$,
    \[
        \supp(\transition(s, a)) \subseteq \L \times \Z \,.
    \]
    On the other hand, using unreliable actions imply decreasing the future value, i.e., for all $s \in \L$,
    \[
        \max_{a \in \A \setminus \R(\1[s])} \EE_{\1[s]}^{a} (\val_{\textnormal{BR}(\X)}(B_1)) < \val_{\textnormal{BR}(\X)}(\1[s]) \,.
    \]
    We show that there is a state in $\L$ with no $\eps$-optimal policy for $\eps$ small enough, which is a contradiction.

    Consider the state $s \in \L$ with the highest belief-reachability value. 
    Note that states with value zero are terminal, so $s$ has strictly positive value.
    Denote the minimum gap obtained by using unreliable actions by 
    \[
        \zeta \defas \min_{\substack{s \in \L,
        \\ a \notin \R(\1[s])}} \val_{\textnormal{BR}(\X)}(\1[s]) - \EE_{\1[s]}^{a} (\val_{\textnormal{BR}(\X)}(B_1)) > 0 \,.
    \]
    Take $\eps \in (0, \min \{\zeta, \val_{\textnormal{BR}(\X)}(\1[s]) \})$ and consider a policy $\policy_\eps$ that is $\eps$-optimal policy for the initial belief $\1[s]$. 
    Note that, because reliable actions do not connect $\L$ to $\T$ and $\val_{\textnormal{BR}(\X)}(\1[s]) > 0$, the policy $\policy_\eps$ must necessarily use unreliable actions.
    We show that $\policy_\eps$ is not $\eps$-optimal, leading to a contradiction.
    
    Denote the first time the action played has positive probability of leaving the set $\L$ by
    \[
        \tau \defas \inf \left\{ t \ge 0 \;:\; \supp \bigl( \transition(S_t, A_t) \bigr) \notin \L \times \Z \right\} \,.
    \]
    Then, we bound the belief-reachability guaranteed by $\policy_\eps$ as follows.
    \begin{align*}
        \PP^{\policy_\eps}_{\1[s]}( \exists t \in \NN \quad B_t \in \D_\X )
            &= \EE_{\1[s]}^{\sigma_\eps} \left( \val_{\textnormal{BR}(\X)}(B_{\tau + 1}) \1[\tau < \infty] \right) 
                & ( \X \cap \L = \emptyset ) \\
            &\le \EE_{\1[s]}^{\sigma_\eps} \left( \val_{\textnormal{BR}(\X)} \left( \1[S_{\tau}] \right) \1[\tau < \infty] \right) - \zeta 
                & (\textnormal{Definition of $\zeta$} ) \\
            &\le \val_{\textnormal{BR}(\X)} \left( \1[s] \right) - \zeta 
                & (\textnormal{Choice of $s$} ) \\
            &< \val_{\textnormal{BR}(\X)}(\1[s]) - \eps \,,
                & (\textnormal{$\eps < \zeta$} )
    \end{align*}
    which contradicts the fact that $\sigma_\eps$ is $\eps$‑optimal. 
    We conclude that $\S_{|\S|} = \S$.

    We now prove that the policy $\policy$ is stopping with parameters $n = |\S| + 2$ and $q = \transition_{\min}^2 \left( \frac{\transition_{\min}}{|\A|} \right)^{|\S|}$.
    Note that, because $\POMDP$ is revealing, for all $t \in [|\S|]$, 
    \[
        s \in \S_t 
        \quad \Rightarrow \quad
        \exists \next{s} \in \S_{t - 1}, a \in \R(\1[s]) \quad \transition(s,a)(\next{s},\next{s}) > 0 \,.
    \]
    By recursion and because $\policy$ plays all reliable actions uniformly at random, we conclude that, for all $s \in \S$,
    \[
        \PP^\policy_{\1[s]} \left( \exists t \le |\S| \quad S_t \in \T \right) 
            \ge \left( \frac{\transition_{\min}}{|\A|} \right)^{|\S|} \,.
    \]
    and thus, 
    \[
        \PP^\policy_{\1[s]} \left( \exists t \le |\S| + 1 \quad B_t \in \D_\T \right) 
            \ge \transition_{\min}\left( \frac{\transition_{\min}}{|\A|} \right)^{|\S|} \,.
    \]
    To conclude, for all initial beliefs $b \in \Delta(\S)$, after every action, the probability that the next belief is a Dirac at least $\transition_{\min}$.
\end{proof}

\stoppingapproxreach*
\begin{proof}
    Consider a POMDP with belief-reachability objectives and an $(n, q)$-stopping optimal policy $\policy$.
    Denote the hitting time of the terminal set $\T\subseteq\S$ by 
    \[
        \tau \defas \inf \{ t \geq 0 : B_t \in \D_\T \}.
    \]
    Because $\policy$ is $(n,q)$-stopping, we have that $\PP^\policy_{b_0}( \tau \le n ) \ge q$.
    Similarly, considering blocks of size $n$, for every integer $k \geq 1$, we have that $\PP^\policy_{b_0} ( \tau > k n ) \le (1 - q)^k$.
    Thus, $\tau$ is finite $\PP^\policy_{b_0}$-a.s.
    
    For every $b_0\in \Delta(\S)$, we have that
    \begin{align*}
        \val_{\textnormal{BR}(\X)}(b_0)
            &= \PP_{b_0}^\policy(\exists t \ge 0 \quad B_t \in \D_\X) 
                &\left(\textnormal{Definition of $\val_{\textnormal{BR}(\X)}$}\right) \\
            &= \PP_{b_0}^\policy(B_\tau\in \D_\X) 
                &(\tau < \infty) \\
            &= \PP_{b_0}^\policy(\tau \le T, B_\tau \in \D_\X ) 
                + \PP_{b_0}^\policy( \tau > T, B_\tau \in \D_\X ) 
                &(\textnormal{Partitioning}) \\
            &\le \PP_{b_0}^\policy( \tau \le T, B_\tau \in \D_\X ) 
                + \PP_{b_0}^\policy( \tau > T) 
                &(\textnormal{Inclusion}) \\
            &\le \PP_{b_0}^\policy( \exists t \le T \quad B_t \in \D_\X ) 
                + \PP_{b_0}^\policy( \tau > T) \,.
                &(\textnormal{Definition of $\tau$}) 
    \end{align*}
    We deduce that $\PP^{\policy}_{b_0} ( \tau > T) \le \eps$ by taking $T = n \left\lceil \dfrac{\log(\eps)}{\log(1-q)} \right\rceil$, which concludes the proof.
\end{proof}

\tstepapprox*

\begin{proof}
    Consider a revealing POMDP with a set of target states $\X \subseteq \S$ and a finite horizon $T\in \NN$.
    Denote the corresponding set of terminal states by $\T\subseteq \S$.
    Consider $b \in \Delta(\S)$ and denote the L1-norm by $\|b\|_1 \defas \sum_{s \in \S} |b(s)|$.

    \noindent \textbf{Simplex Discretization:} We first discretize the set of beliefs $\Delta(\S)$. 
    Let $k \ge 1$ and $\eps > 0$. 
    Define the $k$-uniform grid of beliefs
    \begin{align*}
        G_k \defas \Bigl\{ b \in \Delta(\S): &b(s) = \dfrac{n(s)}{k}, n(s) \in \{0, \ldots, k\},\sum_{s = 1}^{|\S|} n(s) = k \Bigr \} \,.
    \end{align*}
    Denote $\Pi_k\colon \Delta(\S)\to G_k$ the projection function defined by 
    \[
        \Pi_k(b)\defas\argmin_{\substack{b^\prime\in G_k\\[2pt]\supp(b)=\supp(b^\prime)}}\|b-b^\prime\|_1.
    \]
    By construction, it holds that, for every $b\in \Delta(\S)$, there exists $\Pi_k(b) \in G_k$ with 
    \[
        \|b - \Pi_k(b) \|_1 \le \dfrac{|\S|}{k},
    \]
    and taking $k = \left \lceil \dfrac{(T + 1)|\S|}{\eps} \right \rceil$, we have $\|b-\Pi_k(b)\|_1\le \dfrac{\eps}{T+1}$.

    \noindent \textbf{Bellman Equations:} We now define the Bellman equations for the $T$-step belief-reachability value and then for the $T$-step belief-reachability value on the grid previously defined. 
    For every $t \le T$ and $b\in \Delta(\S)$, define the bellman equation of the $t$-step belief reachability value in $\POMDP$ by
    \begin{itemize}
        \item Base case: we have that 
        \[
        \val_{\textnormal{BR}(\X)}^0(b)\defas\left\{
        \begin{array}{ll}
            1 & \textnormal{ if } b=\1[s] \textnormal{ for some } s\in \X,\\
            0 & \textnormal{ otherwise.}
        \end{array}
        \right.
        \]
        \item Induction case: we have that 
        \begin{align*}
            \val_{\textnormal{BR}(\X)}^t(b)& \defas
            \1[b\in \D_\X]+(1-\1[b\in \D_\X])\max_{a\in \A}\sum_{z\in \Z}\sum_{s,s^\prime\in \S} b(s)\transition(s,a)(s^\prime,z)\val_{\textnormal{BR}(\X)}^{t-1}(\tau(b,a,z)),
        \end{align*}
        where $\tau(b,a,z)(s')\defas\frac{\sum_{s\in \S}b(s)\transition(s,a)(s^\prime,z)}{\sum_{s,s^\prime\in \S}b(s)\transition(s,a)(s^\prime,z)}$.
    \end{itemize}
    \noindent We now define the value iteration that considers only beliefs on the grid $G_k$. More formally, for every $b\in \Delta(\S)$, there exists $\Pi_k(b)\in G_k$ such that, 
    \begin{itemize}
        \item Base case: we have that 
        \[
        \widetilde{\val}_{\textnormal{BR}(\X)}^0(\Pi_k(b))\defas\left\{
        \begin{array}{ll}
            1 & \textnormal{ if } \Pi_k(b)=\1[s] \textnormal{ for some } s \in \X,\\
            0 & \textnormal{ otherwise.}
        \end{array}
        \right.
        \]
        \item Induction case: we have that 
        \[
            \widetilde{\val}_{\textnormal{BR}(\X)}^t(\Pi_k(b))\defas \1[\Pi_k(b)\in \D_\X]+(1-\1[\Pi_k(b)\in \D_\X])\max_{a\in \A}\EE_{\Pi_k(b)}^{a}\left(\widetilde{\val}_{\textnormal{BR}(\X)}^{t-1}(\Pi_k(B_1))\right).
        \]
    \end{itemize}
    
    \noindent \textbf{Lipschitz Property:} We show that the $T$-step belief-reachability value is a Lipschitz function over beliefs with same support.
    For all beliefs $b,b^\prime\in \Delta(\S)$ with $\supp(b)=\supp(b^\prime)$, we have
    \begin{align*}
        &\left| \val_{\textnormal{BR}(\X)}^T(b) - \val_{\textnormal{BR}(\X)}^T(b^\prime) \right|\\
        &\qquad = \left| \max_{a \in \A} \EE_b^a \left(\val_{\textnormal{BR}(\X)}^{T-1}\left(B_1\right)\right)- \max_{a \in \A} \EE_{b^\prime}^a \left(\val_{\textnormal{BR}(\X)}^{T-1}\left(B_1\right)\right) \right| & (\textnormal{Bellman optimality})\\
        &\qquad \le \left|\max_{a \in \A} \left[\EE_b^a \left(\val_{\textnormal{BR}(\X)}^{T-1}\left(B_1\right)\right) - \EE_{b^\prime}^a \left(\val_{\textnormal{BR}(\X)}^{T-1}\left(B_1\right)\right)\right] \right|&(\text{Subadditivity of $\max$})\\
        &\qquad \le \max_{a \in \A} \left| \EE_b^a \left(\val_{\textnormal{BR}(\X)}^{T-1}\left(B_1\right)\right) - \EE_{b^\prime}^a \left(\val_{\textnormal{BR}(\X)}^{T-1}\left(B_1\right)\right) \right|&(\text{Monotonicity of $\max$})\\
        &\qquad = \max_{a \in \A} \left| \sum_{s \in \S} \bigl(b(s) - b^\prime(s)\bigr) \EE_{1[s]}^a \left(\val_{\textnormal{BR}(\X)}^{T-1}\left(B_1\right)\right)  \right| & \left(\textnormal{Affinity of } \EE_b^a \left(\val_{\textnormal{BR}(\X)}^{T-1}\left(B_1\right)\right) \right)\\
        &\qquad \le \sum_{s \in \S} \left|b(s) - b^\prime(s) \right| = \left\|b - b^\prime \right\|_1\,, & \left(0 \le \EE_{1[s]}^a \left(\val_{\textnormal{BR}(\X)}^{T-1}\left(B_1\right)\right) \le 1\right)
    \end{align*}

    \noindent\textbf{Error Bound:}
    Take $k=\left\lceil\frac{(T+1)|\S|}{\eps}\right\rceil$. 
    Define for every $t\in \{0,\ldots,T\}$ and belief $b\in \Delta(\S)$,
    \[
        E_t(b)\defas\left|\val_{\textnormal{BR}(\X)}^{t}(b)-\widetilde{\val}_{\textnormal{BR}(\X)}^{t}(\Pi_k(b))\right|
    \]
    We prove using an induction argument that, for every $t \le T$ and $b\in \Delta(\S)$, 
    \[
        E_t(b)\le \dfrac{(t+1)\eps}{T+1}.
    \]
    Observe that when $t=0$, we have that $E_0(b)\le\dfrac{\eps}{T+1}$. 
    Now, assume that $E_{t-1}\le \frac{t\eps}{T+1}$.
    Therefore, we have that, for every $b\in \Delta(\S)$ and taking $k=\left\lceil\dfrac{(T+1)|\S|}{\eps}\right\rceil$,
    \begin{align*}
        E_t(b) &=\left|\val_{\textnormal{BR}(\X)}^t(b)-\widetilde{\val}_{\textnormal{BR}(\X)}^t(\Pi_k(b))\right|\\
        & \le \max_{a\in \A}\left|\EE_{b}^a\left(\val_{\textnormal{BR}(\X)}^{t-1}(B_1)\right)-\EE_{\Pi_k(b)}^a\left(\widetilde{\val}_{\textnormal{BR}(\X)}^{t-1}\left(\Pi_k(B_1)\right)\right)\right|\\
        & \le \max_{a\in \A}\left|\EE_{b}^a\left(\val_{\textnormal{BR}(\X)}^{t-1}(B_1)\right)-\EE_{\Pi_k(b)}^a\left(\val_{\textnormal{BR}(\X)}^{t-1}\left(B_1\right)\right)\right|\\
            &\qquad+\max_{a\in \A}\left|\EE_{\Pi_k(b)}^a\left(\val_{\textnormal{BR}(\X)}^{t-1}\left(B_1\right)\right)-\EE_{\Pi_k(b)}^a\left(\widetilde{\val}_{\textnormal{BR}(\X)}^{t-1}\left(\Pi_k(B_1)\right)\right)\right|\\
        & \le \left\|b-\Pi_k(b)\right\|_1 &\textnormal{(Lipschitz property)}\\
            &\qquad + \max_{a\in \A}\left|\EE_{\Pi_k(b)}^a\left(\val_{\textnormal{BR}(\X)}^{t-1}\left(B_1\right)\right)-\EE_{\Pi_k(b)}^a\left(\widetilde{\val}_{\textnormal{BR}(\X)}^{t-1}\left(\Pi_k(B_1)\right)\right)\right|\\
        & \le \left\|b-\Pi_k(b)\right\|_1\\
            &\qquad + \max_{a\in \A}\EE_{\Pi_k(b)}^a\left(\left|\val_{\textnormal{BR}(\X)}^{t-1}\left(B_1\right)-\widetilde{\val}_{\textnormal{BR}(\X)}^{t-1}\left(\Pi_k(B_1)\right)\right|\right)\\
        & \le  \dfrac{\eps}{T+1} + \max_{a\in \A}\EE_{\Pi_k(b)}^a\left(\left|\val_{\textnormal{BR}(\X)}^{t-1}\left(B_1\right)-\widetilde{\val}_{\textnormal{BR}(\X)}^{t-1}\left(\Pi_k(B_1)\right)\right|\right) &\text{(Definition of $k$)}\\
        & \le \dfrac{\eps}{T+1}+\max_{a\in \A}\EE_{b}^a(E_{t-1}(B_1))&\text{(Definition of $E_{t-1}$)}\\
        & \le \dfrac{\eps}{T+1}+\dfrac{t\eps}{T+1}&\textnormal{(Induction hypothesis)}\\
        & = \dfrac{(t+1)\eps}{T+1},
    \end{align*}
    and the induction argument holds. Therefore, we obtain that $E_T(b)\le \frac{(T+1)\eps}{T+1}=\eps$.\\

    \noindent\textbf{Complexity:}
    For every action, the bellman update considers $O(|\S|^2|\Z|)$. Taking the maximum over actions, we get a complexity bound of $O(|\S|^2|\A||\Z|)$.
    We considered the update on a uniform grid $G_k\subseteq \Delta(\S)$ with, for all $b\in\Delta(\S)$, $\|b-\Pi_k(b)\|_1\le \eps/(T+1)$ by taking $k=\left\lceil\dfrac{(T+1)|\S|}{\eps}\right\rceil$.
    The size of grid in the $(|\S|-1)$-simplex $\Delta(\S)$ is thus 
    \[
    \left|G_k\right| = \binom{k+|\S|-1}{|\S|-1}=O \left( \left( \tfrac{T|\S|}{\eps} \right)^{|\S|-1} \right).
    \] 
    Therefore, we deduce that
    \begin{align*}
        \textnormal{Total time complexity}&=O\left((T+1)\cdot|G_k|\cdot |\S|^2||\A||\Z|\right)\\
        &=O\left((T+1)\cdot\left(\dfrac{T|\S|}{\eps}\right)^{|\S|-1}\cdot|\S|^2||\A||\Z|\right)\\
        &=O\left(\dfrac{T^{|\S|}|\S|^{|\S|+1}|\A||\Z|}{\eps^{|\S|-1}}\right),
    \end{align*}
    which completes the proof. 
\end{proof}

\approxbeliefreach*

\begin{proof}
    Let $\POMDP$ be a revealing POMDP with initial belief $b\in \Delta(\S)$ and set of target states $\X\subseteq \S$.
    By \Cref{Result: Existence of optimal policy}, there exists a $(n,q)$-stopping optimal policy $\policy$ with $n=|\S|+2$ and $q=\transition_{\min}^2\left(\tfrac{\transition_{\min}}{|\S|}\right)^{|\S|}$. 
    By \Cref{Result: Stopping optimal policy approximate the value}, for every $\eps>0$, there exists a time horizon $T=\left\lceil\tfrac{n\log(\eps)}{\log(1-q)}\right\rceil$ such that 
    \[
        \PP_b^\policy(\exists t \le T \quad B_t\in \D_\X)\ge \val_{\textnormal{BR}(\X)}(b)-\eps.
    \]
    Moreover, by \Cref{Result: Approximation of the T step reachability value}, we can compute an approximation of $\val_{\textnormal{BR}(\X)}^{\,T}(b)$ denoted $\widetilde{\val}_{\textnormal{BR}(\X)}^{\,T}$ with total time complexity 
    \[
        O\left(\dfrac{T^{|\S|}|\S|^{|\S|+1}|\A||\Z|}{\eps^{|\S|-1}}\right).
    \]
    We consider the grid $G_k$ as previously defined in the proof of \Cref{Result: Approximation of the T step reachability value} with $k=\left\lceil\tfrac{(T+1)|\S|}{\eps}\right\rceil$.
    For every $b\in \Delta(\S)$, there exists $\Pi_k(b)\in G_k$ such that by taking $T=\left\lceil\tfrac{n\log(\eps)}{\log(1-q)}\right\rceil$, 
    \begin{align*}
        &\left|\val(b)_{\text{BR}(\X)}-\widetilde{\val}_{\text{BR}(\X)}^{\,T}\left(\Pi_k(b)\right)\right|\\
        &\qquad \le \left|\val(b)_{\text{BR}(\X)}-\PP_b^\policy(\exists t \le T \quad B_i\in \D_\X)\right| + \left|\PP_b^\policy(\exists t \le T \quad B_t\in \D_\X)-\widetilde{\val}_{\text{BR}(\X)}^{\,T}\left(\Pi_k(b)\right)\right|&\text{(Triangle inequality)}\\
        &\qquad \le \eps + \left|\PP_b^\policy(\exists t \le T \quad B_t\in \D_\X)-\widetilde{\val}_{\text{BR}(\X)}^{\,T}\left(\Pi_k(b)\right)\right| & \text{(\Cref{Result: Stopping optimal policy approximate the value})}\\
        &\qquad \le 2\eps. &\text{(\Cref{Result: Approximation of the T step reachability value})}
    \end{align*}
    Let $L\in \NN$ be the bit-length bound on transition probabilities so that $\transition_{\min}\geq 2^{-L}$. Since $q\in (0,1]$ and using that $\log(1-q)\le -q$, we have that
    \begin{align*}
        T
        &\le n\left(\dfrac{\log(\eps)}{\log(1-q)}+1\right)\\
        &\le n\left(\dfrac{1}{q}\log\left(\dfrac{1}{\eps}\right)+1\right)\\
        &\le n\left(\left(\dfrac{|\A|}{\transition_{\min}}\right)^{|\S|}\dfrac{1}{\transition_{\min}^2}
        \log\left(\dfrac{1}{\eps}\right)+1\right)\\
        &\le n\left(\left(|\A| 2^L\right)^{|\S|} 2^{2L}\log\left(\dfrac{1}{\eps}\right)+1\right),
    \end{align*}
    and thus, we deduce the following upper bound for the horizon
    \[
    T=O\left((|\S|+2)2^{L(|\S|+2)}|\A|^{|\S|}\log\left(\dfrac{1}{\eps}\right)\right)
    \]
    and thus, $T^{|\S|}=O\left((|\S|+2)^{|\S|}2^{L|\S|(|\S|+2)}|\A|^{{|\S|}^2}\log(1/\eps)^{|\S|}\right)$.
    Inserting this bound into the above complexity yields 
    \begin{align*}
        \textnormal{Total time complexity}=O\left(\dfrac{(|\S|+2)^{|\S|}|\S|^{|\S|+1}|\Z|2^{L|\S|(|\S|+2)}|\A|^{{|\S|}^2+1}\log(1/\eps)^{|\S|}}{\eps^{|\S|-1}}\right).
    \end{align*}
    Therefore, the total time complexity is upper bounded by $2^{\textnormal{poly}(L,|\S|,\log(|\A|),\log(|\Z|),\log(1/\eps))}$, which concludes the proof.
\end{proof}


\section{Proofs of {\Cref{Section: Parity}}}

\almostsuresafetygeneral*

\begin{proof}
    Consider a POMDP $\POMDP = (\S, \A, \Z, \transition, b_0)$, its underlying MDP $\MDP$ with states $\S \times \Z$, an end component $\U = (\Q, \E)$ of $M$, and a policy $\policy$ on $\POMDP$ such that $\PP_{b_0}^{\policy} ( \I(\play) = \U ) > 0$.
    By contradiction, assume that there exists a state $q \in \Q$, an action $a \in \E(q)$, and a signal $z \in \Z$ such that, for all policies $\another{\policy}$ on $\POMDP$, we have
    \begin{equation}
    \label{Assumption: Parity less than 1}
        \PP_{b}^{\another{\policy}} ( \I(\play) \subseteq \U) < 1 \,,
    \end{equation}
    where $b$ is the belief after starting with belief $\1[q]$, playing action $a$, and receiving signal $z$.
    We show that $\PP_{b_0}^{\policy} ( \I(\play) = \U) = 0$, which is a contradiction.

    By definition of end components, because $\PP_{b_0}^{\policy} ( \I(\play) = \U ) > 0$, for every action $a \in \E(q)$, the policy $\policy$ plays action $a$ infinitely often when the belief of the controller has positive probability on $q$.
    By \Cref{Assumption: Parity less than 1}, for all policies $\another{\policy}$ on $\POMDP$, the set $\S \setminus \Q$ is reached with positive probability, i.e., 
    \begin{equation}
    \label{Statement: Positive reachability}
        \PP_{b}^{\another{\policy}} ( \exists t \ge 0 \quad S_t \in \S \setminus \Q ) > 0 \,.
    \end{equation}
    We claim that the set $\S \setminus \Q$ is in fact reached within $2^{|\Q|}$ steps with positive probability.
    Formally, for all policies $\another{\policy}$,
    \begin{equation}
    \label{Eq: Leaving probability within bounded time horizion}
        \PP_{b}^{\another{\policy}} ( \exists t \le 2^{|\Q|} \quad S_t \in \S \setminus \Q) > 0 \,.
    \end{equation}
    
    Indeed, we show that, starting from $b$, all policies reach a belief at which every action leads outside of $\Q$ with positive probability. 
    Consider a belief $\another{b} \subseteq \Delta(\Q)$. 
    Then, an action $a \in \A$ is safe for $\another{b}$ if $\PP_{\another{b}}^{a} (\supp(B_1) \subseteq \Q) = 1$.
    Note that, if an action is safe for a belief, then it is safe for all beliefs with the same support.
    Also, starting from $b$, if the controller can force the dynamic to visit only beliefs whose support has some safe action, then taking these actions one constructs a policy $\another{\policy}$ such that $\PP_{b}^{\another{\policy}} ( \forall t \ge 0 \quad S_t \in \Q) = 1$.
    By \Cref{Statement: Positive reachability}, this is not the case, i.e., for all policies the dynamic must visit a belief whose support has no safe action.
    Moreover, for each policy, such a belief is reached in the first $(2^{|\Q|} - 1)$ steps because there are only so many supports, which proves our claim.

    \Cref{Eq: Leaving probability within bounded time horizion} implies that the probability of reaching the set $\S \setminus \Q$ is lower bounded, i.e., for all policies $\another{\policy}$,
    \[
        \PP_{b}^{\another{\policy}} \left( \exists t \le 2^{|\Q|} \quad S_t \in \S \setminus \Q \right ) 
        \;\ge\; \left(\transition_{\min} \right)^{2^{|\Q|}} \,.
    \]
    With this, we prove that $\PP_{b_0}^{\policy} ( \I(\play) = \U) = 0$, which is a contradiction.
    
    Indeed, 
    because $\1[q]$ is visited infinitely often, action $a$ played infinitely often, and therefore the belief $b$ is visited infinitely often.  
    Conditioned on the event $\{\inf(\play)=\U\}$, visits to the state $q$ followed by playing the action $a$ occur infinitely
    often under policy $\policy$. The above inequality implies that some states $s \in \S \setminus \Q$ are reached with lower-bounded positive probability after such visits. 
    Therefore, we get
    \[
        \PP_{b_0}^{\policy} ( \I(\play) = \U) = 0\,,
    \]
    which contradicts with the assumption, and therefore, yields the result.
\end{proof}

\almostsuresafetyrevealing*

\begin{proof}
    By \Cref{Result: Existence of almost-sure saftey policy for an end component in general PODMPs}, for all states $q \in \Q$, actions $a \in \E(q)$, and signals $z \in \Z$, there exists a policy $\policy_{q,a,z}$ such that 
    \[
        \PP_{b}^{\policy_{q,a,z}} ( \I(\play) \subseteq \U) \;=\; 1 \,,
    \]
    where $b$ is the belief after starting with belief $\1[q]$, playing action $a$, and receiving signal $z$.
    We now construct a policy $\policy_q$ such that
    \[
        \PP_{\1[q]}^{\policy_q} ( \I(\play) = \U) \;=\; 1 \,.
    \]
    The policy proceeds in \emph{phases}, one for every ordered pair of state $\another{q} \in \Q$ and $\another{a} \in \E(\another{q})$. The goal of phase $(\another{q}, \another{a})$ is to reach the Dirac belief $\1[\another{q}]$ and play action $\another{a}$. By cycling through all pairs, each possible state–action pair is visited infinitely often. For each phase $(\another{q}, \another{a})$, the policy $\policy_q$ is as follows. While the current belief is \emph{not} Dirac,
    $\policy_q$ follows the almost‑sure safety policy defined in \Cref{Result: Existence of almost-sure saftey policy for an end component in general PODMPs}. Since the POMDP is revealing, a Dirac belief $\1[\next{q}]$ is reached almost-surely. 
    Since $\U$ is an end component in the MDP, there exists a finite path $(q_0, a_0, q_1, a_1 \ldots, q_k)$ inside $\U$ such that $q_0 = \next{q}$ and $q_k = \another{q}$, and $q_{i+1} \in \supp(\transition(q_i, a_i))$. 
    The policy now plays $(a_0, a_1, \dots ,a_{k-1})$ in order, stopping early if the belief becomes non‑Dirac. In this case, it returns to the safety policy to stay inside $\U$. Once the belief is $\1[\another{q}]$, $\policy_q$ plays the action $\another{a}$ and then proceeds to the next phase. Therefore, all states and actions inside end component $\U$ are visited infinitely often, which completes the proof.
\end{proof}

\paritytobeliefreach*

\begin{proof}
    Consider a revealing POMDP $P$ with the initial belief $b_0 \in \Delta(\S)$. 
    Denote the set of states for which, if the initial belief were a Dirac on that state, then the parity condition can be satisfied almost-surely by $\X \defas \left\{ s \in \S : \exists \policy \quad \PP_{\1[s]}^{\policy}( \Parity ) = 1 \right\}$.   
    We show that, for all policies $\policy\in \Policies$, 
    \[
        \PP_{b_0}^{\policy} \bigl ( \play \in  \Parity \bigr ) 
            = \PP_{b_0}^{\policy} \bigl ( \exists t \ge 0 \quad B_t \in \D_{\X} \bigr ) \,,
    \]
    which implies the equality by maximizing over the policy.
    
    Fix a policy $\policy$.
    Denote $\W$ the set of all end components of the underlying MDP on $\S \times \Z$ that can be reached by $\policy$.
    Formally,
    \[
        \W \defas \{ \U : \PP_{b_0}^{\policy}( \I(\play) = U ) > 0 \} \,. 
    \]
    Fix an end component $\U = (\Q, \E) \in \W$. 
    Note that, if a play $\play$ eventually remains in $U$ forever, i.e., $\I(\play) = \U$, then, by definition of end component, all the states $q \in \Q$ are visited infinitely often.
    Therefore, if $\I(\play) = \U$, then
    \[
        \play \in \Parity \iff \min \{ \priority(q) : q \in \Q \} \textnormal{ is even.}
    \]
    Consequently, the parity condition being satisfied depends on the end component reached. 
    Formally, for all end components $\U \in \W$,
    \[
        \PP_{b_0}^{\policy} \bigl( \play \in \Parity \,\mid\, \I(\play) = \U \bigr) \in \{0, 1\}\,.
    \]
    Denote by $\W_{\textnormal{P}} \subseteq \W$ the set of end components where the parity condition is satisfied with probability $1$ conditioned on eventually remaining in the end component.
    Formally, 
    \[
        \W_{\textnormal{P}}
            \defas \left\{ \U \in \W :
                \PP_{b_0}^{\policy} \bigl( \play \in \Parity \,\mid\, \I(\play) = \U \bigr) = 1 
            \right\} \,.
    \]

    Recall that, by \Cref{Result: End components}, some end component must be reached.
    Fix an arbitrary end component $\U = (\Q, \E) \in \W_{\textnormal{P}}$. 
    By \Cref{Result: Existence of almost-sure saftey policy for an end component}, for all states $q \in \Q$, there exists an almost-sure winning policy for the parity objective when the initial belief is $\1[q]$. 
    Therefore, $q$ is an almost-sure winning parity state. 
    In conclusion, we get that
    \begin{align*}
        \PP_{b_0}^{\policy} \bigl ( \play \in  \Parity \bigr ) 
            &= \sum_{\U \in \W_{\textnormal{P}}} \PP_{b_0}^{\policy} \bigl ( \I(\play) = \U \bigr ) 
                & (\textnormal{\Cref{Result: End components}})\\
            &= \PP_{b_0}^{\policy} \bigl ( \exists t \ge 0 \quad B_t \in \D_{\X} \bigr ) \,,
                & (\textnormal{\Cref{Result: Existence of almost-sure saftey policy for an end component}})
    \end{align*}
    which yields the result.
\end{proof}

\approxparity*

\begin{proof}
    By \cite[Theorem 3]{bellyRevelationsDecidableClass2025}, computing almost-sure winning parity states in the revealing POMDP is in \textnormal{EXPTIME}.
    By \Cref{Result: Approximating belief-reachability is EXPTIME}, approximating the belief-reachability value to this set of states is in \textnormal{EXPTIME}.
    By \Cref{Result: Reduction of parity to belief-reachability}, this value coincides with the parity value and therefore we computed an approximation of the parity value in \textnormal{EXPTIME}.
\end{proof}

\parityoptimal*

\begin{proof}
    Let $\S_{\textnormal{P}}$ be the set of almost-sure winning parity states.
    By \Cref{Result: Reduction of parity to belief-reachability}, the parity value coincides with the belief-reachability value to the set $\S_{\textnormal{P}}$. 
    By \Cref{Result: Optimal policy for reachability}, there exists an optimal belief-reachability policy $\policy_0$. 
    For all state $s \in \S_{\textnormal{P}}$, let $\policy_{s}$ be the almost-sure winning parity policy when the initial belief is $\1[s]$. 
    We now construct the optimal policy $\policy$ as follows.
    The optimal policy follows the policy $\policy_0$ until it reaches a Dirac belief $\1[s]$ where $s \in \S_{\textnormal{P}}$. Then, it switches to following the policy $\policy_{s}$. 
    The probability of satisfying the parity condition under the policy $\policy$ is the belief-reachability value to the set $\S_{\textnormal{P}}$, which coincides with the parity value. This completes the proof.
\end{proof}

\limitsure*

\begin{proof}
    Consider revealing POMDPs with parity objectives.
    By \Cref{Result: Revealing POMDPs with parity objectives have 0-optimal policy}, optimal policies exist.
    Therefore, if a revealing POMDP with parity objectives is limit-sure winning, then it is almost-sure winning.
    The reverse direction follows by definition.
    Lastly, by~\cite[Theorem 3]{bellyRevelationsDecidableClass2025}, the existence of an almost-sure policy is in EXPTIME-complete. 
    Therefore, limit-sure analysis is also EXPTIME-complete because the two problems coincide.
\end{proof}

\section{Outline of Algorithms}

This section outlines three algorithms for quantitative analysis: for the $T$-step belief-reachability, for the belief-reachability, and for the parity value, respectively. 

Given a belief $b\in \Delta(\S)$, an action $a\in \A$, and a signal $z\in \Z$, recall that the belief update is defined by
\[
\tau(b,a,z)(s')\defas\frac{\sum_{s\in \S}b(s)\transition(s,a)(s^\prime,z)}{\sum_{s,s^\prime\in \S}b(s)\transition(s,a)(s^\prime,z)}\,.
\]

\paragraph{Outline of \Cref{Algorithm: Approximation of T-step belief-reachability}}
Given a revealing POMDP $\POMDP$, a target set $\X \subseteq \S$, a time horizon $T\in \NN$, and an additive error $\eps>0$, the algorithm returns a value that is guaranteed to be within $\eps$ of the $T$-step belief-reachability value.
Intuitively, \Cref{Algorithm: Approximation of T-step belief-reachability} replaces the continuous set of beliefs with a finite uniform grid to run value iteration only on those grid points.

The algorithm has two stages and follows from the proof of \Cref{Result: Approximation of the T step reachability value}:
\begin{itemize}
    \item[$(1)$] 
        We build a finite uniform grid of beliefs $G_k\subseteq\Delta(\S)$ such that every belief is at most $\eps/(T+1)$ away from some grid point.
    \item[$(2)$] 
        We perform $T$ backward Bellman updates on the grid: each grid belief evaluates every action and keeps the one with the best expected value over the projected successor beliefs.
\end{itemize}
Because the $T$-step belief-reachability value is Lipschitz, the projection errors accumulated over the steps is at most $\varepsilon$.
Therefore, the algorithm returns an epsilon approximation of the $T$-step belief-reachability value.

\paragraph{Outline of \Cref{Algorithm: Approximate belief-reachability value}}
Given a revealing POMDP $\POMDP$, a target set $\X\subseteq \S$, and an additive error $\eps>0$, the algorithm return a value that is guaranteed to be within $\eps$ of the belief-reachability value.
Intuitively, \Cref{Algorithm: Approximate belief-reachability value} approximates the belief-reachability value by computing the $T$-step belief reachability value for a horizon $T$ chosen large enough that the two differ by at most $\eps$.

The algorithm proceeds as follows:
\begin{itemize}
    \item[$(1)$] 
        We first choose $T$ large enough.
        By \Cref{Result: Stopping and reliable is optimal} and \Cref{Result: Existence of optimal policy}, there is an $(n,q)$-stopping optimal policy $\policy$ with parameters $n=|\S|+2$ and $q=\delta_{\min}^2(\delta_{\min}/|\A|)^{|\S|}$ for the belief-reachability objectives. 
        Next, by \Cref{Result: Stopping optimal policy approximate the value}, we set $T=n\lceil\log(\eps/2)/\log(1-q)\rceil$ so that the $T$-step belief-reachability objective under $\policy$ is an $\eps/2$ approximation of the belief-reachability value.
    \item[$(2)$] 
        We call \Cref{Result: Approximation of the T step reachability value} by taking $T=n\lceil\log(\eps/2)/\log(1-q)\rceil$ and $\eps/2$. 
\end{itemize}
By \Cref{Result: Approximating belief-reachability is EXPTIME}, the algorithm returns an epsilon approximation of the belief-reachability value.

\paragraph{Outline of \Cref{Algorithm: Approximation of parity}}
Given a revealing POMDP $\POMDP$, a priority function $\priority: \S \to \{0,\dots,d\}$, and an additive error $\eps>0$, the algorithm returns a value that is guaranteed to be within $\eps$ of the parity value.
Intuitively, \Cref{Algorithm: Approximation of T-step belief-reachability} computes the approximation of the belief-reachability objectives to the set of Dirac on the almost-sure winning parity states.

The algorithm proceeds as follows:
\begin{itemize}
    \item[$(1)$]
        We use \cite[Theorem 3]{bellyRevelationsDecidableClass2025} to compute the set of almost-sure winning states, denoted $\S_P$.
    \item[$(2)$] 
        We call \Cref{Algorithm: Approximate belief-reachability value} by taking the target set as $\S_P$.
\end{itemize}
By \Cref{Result: Reduction of parity to belief-reachability}, the parity value of $\POMDP$ and the belief-reachability value to the set of Dirac on the almost-sure winning parity states objective coincide.
Therefore, because \Cref{Algorithm: Approximate belief-reachability value} returns an epsilon approximation of the belief reachability value, the algorithm returns an epsilon approximation of the parity value.

\begin{algorithm}[t]
    \caption{Approximation of $T$-step belief-reachability value}
    \label{Algorithm: Approximation of T-step belief-reachability}
    \KwIn{Revealing POMDP $P=(\S, \A, \Z, \transition, b_0)$, 
          target set $\X \subseteq \S$, horizon $T \in \NN$, and additive error $\eps > 0$}
    \KwOut{$v$ such that 
            $\left|v -\val^{T}_{\textnormal{BR}(\X)}(b_0)\right|\le \eps$}
    
    $k\leftarrow \left \lceil \dfrac{(T+1)|\S|}{\eps}\right \rceil$ \;
    
    $G_k \leftarrow 
       \left \{
         b\in\Delta(S):
         b(s)=\dfrac{n(s)}{k},\;
         n(s)\in\{0,\dots,k\},\,
         \sum_{s} n(s)=k
       \right \}$ \;

    \ForEach{$b\in G_k$}{
        $\widetilde{\val}_{\textnormal{BR}(\X)}^0(b) \leftarrow \begin{cases}
            1 & \text{if } b \in \D_\X, \\
            0 & \text{otherwise.}
        \end{cases}$\;
    } 
    
    \For{$t\gets 1$ \KwTo $T$}{
        \ForEach{$b\in G_k$}{
            \lIf{$b\in \D_\X$}{$\widetilde{\val}_{\textnormal{BR}(\X)}^t(b) \leftarrow 1$}
            \lElse{
                \ForEach{$a\in \A$}{
                    $Q_a\leftarrow 0$ \;
                    \ForEach{$z\in \Z$}{
                        \ForEach{$s,s'\in \S$}{
                            $p\leftarrow b(s)\,\delta(s,a)(s',z)$\;
                            \If{$p>0$}{
                                $b'\leftarrow \Pi_k(\tau(b,a,z))$\;  
                                $Q_a\leftarrow Q_a + p \cdot \widetilde{\val}_{\textnormal{BR}(\X)}^{t-1}(b')$\;
                            }
                        }
                    }
                }
                $ \widetilde{\val}_{\textnormal{BR}(\X)}^t(b) \leftarrow \max_{a\in A} Q_a$ 
            }
        }
    }
    \Return $\widetilde{\val}_{\textnormal{BR}(\X)}^T(\Pi_k(b))$\;
\end{algorithm}

\begin{algorithm}[t]
    \caption{Approximation of belief-reachability value}
    \label{Algorithm: Approximate belief-reachability value}
    \KwIn{Revealing POMDP $P=(\S,\A,\Z,\transition,b_0)$, target set $\X\subseteq\S$, and additive error $\eps>0$}
    \KwOut{$v$ such that $\left|v-\val_{\textnormal{BR}(\X)}(b_0)\right|\le\eps$}
    
    $\delta_{\min}\;\leftarrow\;\min\bigl\{\transition(s,a)(s',z): \forall s,s' \in \S, a \in \A, z \in \Z \quad \transition(s,a)(s',z)>0\bigr\}$\;
    
    $n \;\leftarrow\; |\S|+2$\;

    $T \;\leftarrow\;
          n\,\left\lceil\; \dfrac{\log(\eps/2)}{\log(1-q)} \right\rceil$\; 
    
    $v \;\leftarrow\;
       \textnormal{\Cref{Algorithm: Approximation of T-step belief-reachability}}\bigl(P,\X,b_0,T,\eps/2\bigr)$\;

    \Return $v$\;
\end{algorithm}

\begin{algorithm}[t]
    \caption{Approximation of parity value}
    \label{Algorithm: Approximation of parity}
    \KwIn{Revealing POMDP $P=(\S,\A,\Z,\transition,b_0)$, 
          priority function $\priority: \S \to \{0,\dots,d\}$, 
          and additive error $\eps>0$}
    \KwOut{$v$ such that $\left|v-\val_{\textnormal{P}}(b_0)\right| \le \eps$}
    
    $\X
       \;\leftarrow\;
       \textsc{AlmostSureParityStates}(P,\priority)$\;
    
    $v
       \;\leftarrow\;
       \textnormal{\Cref{Algorithm: Approximate belief-reachability value}}
          \bigl(P,\X,b_0,\eps\bigr)$\;
    
    \Return $v$\;
\end{algorithm}

\end{document}